\definecolor{xlinkcolor}{cmyk}{1,1,0,0}
\newtheorem{theorem}{Theorem}[section]
\newtheorem{corollary}[theorem]{Corollary}
\newtheorem{lemma}[theorem]{Lemma}
\newtheorem{claim}[theorem]{Claim}
\newtheorem{definition}[theorem]{Definition}
\newtheorem{remark}[theorem]{Remark}
\newcommand{\poly}{\text{poly}}
\newcommand{\logstar}{\log^{*}}
\newcommand{\eps}{\varepsilon}
\renewcommand{\tilde}{\widetilde}
\def\ed{\mathsf{ed}}
\newcommand{\HSC}{$\mathsf{HSC}$\xspace}
\newcommand{\EFEEOV}{$\exists \forall \exists \exists\mathsf{OVH}$\xspace}
\title{On Complexity of 1-Center in Various Metrics	}  
\author{Amir Abboud\footnote{This project has received funding from the European Research Council (ERC) under the European Union’s Horizon Europe research and innovation programme (grant agreement No 101078482). Additionally, the author is supported by an Alon scholarship and a research grant from the Center for New Scientists at the Weizmann Institute of Science.}\\  Weizmann Institute of Science\\ \texttt{amir.abboud@weizmann.ac.il}
\and MohammadHossein Bateni\\ Google Research\\ \texttt{bateni@google.com}\and
Vincent Cohen-Addad\\ Google Research\\ \texttt{vcohenad@gmail.com}\and
Karthik C.\ S.\footnote{This work was supported by Subhash Khot's Simons Investigator Award and by a grant from the Simons Foundation, Grant Number 825876, Awardee Thu D. Nguyen  and  by the National Science Foundation under Grant CCF-2313372.}\\ Rutgers University \\ \texttt{karthik.cs@rutgers.edu}\and 
Saeed Seddighin\footnote{Supported by a Google research gift.}\\ Toyota Technological Institute at Chicago\\  \texttt{saeedreza.seddighin@gmail.com} }
\date{}
\newcommand{\ground}{\Sigma^*}
\newcommand{\dist}{D}
\newcommand{\nph}{NP-hard}
\newcommand{\woh}{$W[1]$-hard}
\newcommand{\ptas}{PTAS}
\newcommand{\reals}{\mathbb{R}}
\begin{document}

\maketitle

\begin{abstract}
  We consider the classic 1-center problem: Given a set $P$ of $n$ points in a metric space
  find the point in $P$ that minimizes the maximum distance to the other points of $P$.
  We study the complexity of this problem in $d$-dimensional $\ell_p$-metrics and in edit and Ulam metrics over strings of length $d$.
  Our results for the 1-center problem may be classified based on $d$ as follows.

\begin{itemize}
\item \textbf{Small $d$.} 
Assuming the hitting set conjecture (\HSC), we  show that when $d=\omega(\log n)$, no subquadratic algorithm can solve the 1-center problem in any of the $\ell_p$-metrics, or in the edit or Ulam metrics.

\item \textbf{Large $d$.} 
When $d=\Omega(n)$, we extend our conditional lower bound to rule out sub\emph{quartic} algorithms for the 1-center problem in edit metric (assuming Quantified SETH).
On the other hand, we give a $(1+\epsilon)$-approximation for 1-center in the Ulam metric
with running time $\tilde{O_{\eps}}(nd+n^2\sqrt{d})$.  
\end{itemize}

We also strengthen some of the above lower bounds by allowing approximation algorithms or by reducing the dimension $d$, but only against a weaker class of algorithms which list all requisite solutions.
Moreover, we  extend one of our hardness results to rule out subquartic algorithms for  the well-studied 1-median problem in the edit metric, where given a set of $n$ strings each of length $n$, the goal is to find a string in the set that minimizes the sum of the edit distances to the rest of the strings in the set.
\end{abstract}

\clearpage

\section{Introduction}
Given a set of points $P$ in a metric space, finding the point that ``best'' represents
$P$ is a fundamental question in both discrete and continuous optimization. Motivated by
applications ranging from machine learning to computational biology, this question has
naturally received a large amount of attention through the years.

The objective can be phrased in various ways: In the median problem, the goal is to find
the point $p$ that minimizes the sum of the distances to the points in $P$; in the mean
problem, it is the point that minimizes the sum of distances squared; while in the center
problem, it is the point $p$ that minimizes the maximum distance from a point of $P$ to $p$.
When the metric is the $\ell_2$ (Euclidean) metric, the question of computing the geometric median 
dates back to the 17th century, when Torricelli was looking for a solution for the case $|P| = 3$,
and to whom Fermat described an explicit solution.
More recently, the question of computing the center has also become central in applications arising,
e.g., in machine learning~\cite{ClarksonHW10}, to compute the minimum enclosing ball of a set
of points, or in computational biology, to find a good representative of a set of strings (representing
molecular sequences) (e.g.,~\cite{nicolas2005hardness}). This fundamental computational
geometry problem which has applications to various domains,  is the problem we consider in this paper.

Formally, in the (often referred to as the \emph{discrete}) 1-center problem, the input is a set of
points $P$ in a metric space, and the goal is to find a point of $P$ that minimizes
the maximum distance to the points in $P$. When doing data summarization or compression, the discrete
version often makes more sense: Given a set of, say $n$ strings, taking the most representative
string among the input strings, or at least in the set of grammatically (or semantically) meaningful strings
is much more insightful than taking an arbitrary string as representative. This also applies more globally,
outputting a data element that has been observed provides a better summary than a data element that has been
forged by the algorithm and that may be unlikely to exist in the real-world.
From a computational complexity standpoint, this problem can be easily solved in time $O(|P|^2 f(d))$ where $f(d)$ is the
time required to compute the distance of two points.
This can be done by enumerating all possible choices for the center; and for each choice computing the distance from each
point in $P$; then outputting the best center. However, is this na\"\i{}ve algorithm the best we can do?

The computational geometry community has done extensive work on the above question since the 80s. For metrics such as $\ell_1$ or
$\ell_2$, computing the center has received a large deal of attention. When the dimension is
assumed to be a constant, there exist \emph{barely subquadratic} algorithms for the $\ell_2$ metric,
while there exists near-linear time algorithms for the $\ell_1$ case (for a discussion
on this we refer the reader to~\cite{W18}).
For the case of string
metrics, such as Ulam or Edit distance metrics, nothing better than the $O(|P|^2 f(d))$
(where $d$ is the string length) ``brute-force'' algorithm is known.

Understanding how fast the 1-center problem can be solved in these different metrics is not only
interesting from a computational complexity point of view, but also from the perspective of an improved understanding of
 the geometry of these metrics. For example, is the geometry of the $\ell_1$ metric ``more amenable''
for designing algorithms than the $\ell_2$ one? Is the Edit distance metric hard for such problems?
We also believe that understanding the geometry of the  Ulam
and Edit distance metrics, which one may interpret as generalization of the Hamming metric,   is not only a very basic computational geometry question, but also would  likely lead to better algorithms for these widely-studied problems. 
We thus ask:

\begin{center}\emph{How fast can the 1-center problem be solved or approximated in \\ $\ell_p$-metrics and stringology metrics such as Ulam or Edit distance?}\end{center}

\subsection{Our Results}
In this paper, we take a step towards answering the above question by providing lower and upper bounds
on solving the 1-center problem  in $\ell_p$, Ulam, and Edit distance
metrics.

Assuming the Hitting Set Conjecture (\HSC), we provide a strong conditional lower bound for the 1-center problem, in a pleathora of metrics.

\begin{restatable}{theorem}{ullower}(see Theorem~\ref{thm:ellp}, Corollary~\ref{cor:ulamsubquad}, and Corollary~\ref{cor:editsubquad} for formal statement)
  \label{thm:ulam:center:LB}
  Assuming \HSC, no algorithm running in time $n^{2-o(1)}$ can given as input a set of points/strings $P$ of dimension/length $d$ solve the discrete 1-center problem in Edit/Ulam/$\ell_p$ metric, where $|P|=n$, $d=\tilde\Omega(\log n)$, and $p\in \mathbb{R}_{\ge 1}\cup\{0\}$.
\end{restatable}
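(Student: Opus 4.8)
The plan is to reduce the Hitting Set problem to the discrete 1-center problem, exploiting the fact that the $\exists\,\forall$ structure of Hitting Set --- ``is there a set in $U$ that intersects every set in $V$?'' --- mirrors the $\exists\,\forall$ structure of 1-center --- ``is there a point whose distance to every other point is small?''. Recall that \HSC asserts that, given two families $U=\{u_1,\dots,u_n\}$ and $V=\{v_1,\dots,v_n\}$ of subsets of a universe $[d]$ with $d=\tilde\Omega(\log n)$, deciding whether some $u_i$ intersects all of $v_1,\dots,v_n$ cannot be done in time $n^{2-\delta}$ for any $\delta>0$.

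First I would carry out the reduction in the Hamming metric (equivalently $\ell_0$, and simultaneously $\ell_1$ on $0/1$-vectors). From a Hitting Set instance I produce a point set $P'=\{\,\hat u_i : i\in[n]\,\}\cup\{\,\hat v_j : j\in[n]\,\}\cup\{z\}$ of $2n+1$ points in $\{0,1\}^{d'}$ with $d'=O(d)$, built from a few coordinate blocks: a ``universe'' block carrying the characteristic vectors, two ``weight-padding'' blocks used to force \emph{every} point of $P'$ to have the same Hamming weight, and two constant-size ``anchor'' blocks. Equal weight is the crucial normalisation, since it makes the Hamming distance between any two points of $P'$ an affine decreasing function of their inner product, so eccentricities are governed by minimum inner products. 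The blocks are arranged so that $\langle\hat u_i,\hat v_j\rangle$ equals exactly $|u_i\cap v_j|$, while $\langle \hat u_i,\hat u_{i'}\rangle$ and $\langle \hat u_i,z\rangle$ are always at least $1$; the anchor point $z$ is placed so that it is close to every $\hat u_i$ but far from every $\hat v_j$ (it agrees with the $U$-points on the anchor block and disagrees with the $V$-points there). Consequently neither any $\hat v_j$ nor $z$ can be an optimal center, the optimal center must be some $\hat u_i$, and its eccentricity lies below a fixed threshold $R=2d$ if and only if $u_i$ hits every $v_j$. Hence an $n^{1.99}$-time 1-center algorithm in the Hamming (or $\ell_1$) metric on $2n+1$ points in dimension $O(d)=\tilde\Omega(\log n)$ would refute \HSC.

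To cover all $\ell_p$ with $p\in\R_{\ge1}$ at once, I would use that on $0/1$-vectors $\|a-b\|_p=\|a-b\|_0^{1/p}$, a strictly increasing function of the Hamming distance; therefore the \emph{same} point set $P'$ has the same optimal 1-center, with the threshold $R$ replaced by $R^{1/p}$, and the reduction goes through verbatim (the case $p=0$ being the Hamming case already handled), giving Theorem~\ref{thm:ellp}. For the Ulam and edit metrics I would compose the Hamming instance with a length-$O(d')$ isometric encoding of the hypercube: replacing coordinate $i$ of $a\in\{0,1\}^{d'}$ by the two-symbol string $x_iy_i$ or $y_ix_i$ over a pair of symbols private to block $i$ yields permutations whose Ulam distance equals the Hamming distance exactly, and replacing coordinate $i$ by a single private symbol depending on $a_i$ yields strings whose edit distance equals the Hamming distance exactly; monotonicity (indeed isometry) again preserves the optimal center and the decision gap, and the encoded length remains $\tilde\Omega(\log n)$ (a constant alphabet can be obtained, if desired, at the cost of an extra $O(\log d')$ factor in the length). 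This yields Corollary~\ref{cor:ulamsubquad} and Corollary~\ref{cor:editsubquad}.

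The main obstacle is the design of the Hamming gadget in the second step: engineering the coordinate blocks so that three requirements hold simultaneously within $O(d)$ dimensions --- (i) all candidate centers $\hat u_i$ are mutually close; (ii) $\hat u_i$ is close to $\hat v_j$ exactly when $u_i\cap v_j\neq\emptyset$, with a clean unit gap; and (iii) every $\hat v_j$ and the anchor $z$ have strictly larger eccentricity than any good $\hat u_i$, so that the quantifier alternation of Hitting Set is genuinely enforced and is not ``short-circuited'' by a $V$-point (or an artifact point) serving as center. Uniform Hamming weight (to linearise distance in the inner product) together with the single anchor point $z$ (close to $U$, far from $V$) is what makes all three compatible; the remaining verifications are routine inner-product bookkeeping, as are the monotone transfers to $\ell_p$, Ulam, and edit.
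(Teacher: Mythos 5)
Your proposal follows essentially the same route as the paper: reduce \HSC{} to 1-center over $\{0,1\}^{O(\log n)}$ under Hamming distance via a point set consisting of encoded $\mathcal{A}$-sets, encoded $\mathcal{B}$-sets, and one anchor point that forces the center to come from the $\mathcal{A}$-side (the paper's Theorem~\ref{thm:ellp} builds exactly this kind of gadget, normalizing distances via complementary blocks rather than explicit equal-weight padding, but to the same effect); then lift to all $\ell_p$ via $\|a-b\|_p=\|a-b\|_0^{1/p}$ on $0/1$-vectors, to Ulam via the two-private-symbols-per-coordinate permutation encoding (the paper's Lemma~\ref{lem:hamtoulam}), and to edit via a coordinate-local encoding with $O(\log d)$ overhead to reach a binary alphabet (the paper's Lemma~\ref{lem:hamtoedit}). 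One small slip: under the permutation encoding the Ulam distance is $2\|a-b\|_0$ (each flipped coordinate costs a delete-plus-reinsert), not $\|a-b\|_0$ as you state, but since this is a uniform scaling it does not affect the reduction.
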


Moreover,  by assuming a stronger complexity theoretic assumption we can strengthen this lower bound in the case when $d=\poly(n)$ for the Edit metric. For the sake of presentation, we state our result below when $d=n$. 

\begin{restatable}{theorem}{stronglower}(see Theorem~\ref{thm:Edit})
  \label{thm:ulameditLB}
  Assuming Quantified SETH, no algorithm running in time $n^{4-o(1)}$ can given as input a set of $n$ strings of length $d:=n$ each, solve the discrete 1-center problem in Edit  metric.
\end{restatable}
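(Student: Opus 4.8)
The plan is to reduce from a quantified Orthogonal Vectors problem with three quantifier alternations --- the \EFEEOV hypothesis, which under Quantified SETH requires $n^{4-o(1)}$ time --- to the discrete $1$-center problem in the edit metric. The guiding observation is that the $1$-center objective is inherently a ``$\exists$ a center, $\forall$ a client'' ($\Sigma_2$) question, so it can host the \emph{outer} two quantifier blocks of \EFEEOV, while the \emph{inner} two (both existential) blocks get folded into a single edit-distance computation via the classical Orthogonal-Vectors-to-edit-distance reduction of Backurs--Indyk. Since that reduction turns an $n$-vs-$n$ OV instance into a pair of strings of length $\tilde O(n)$, each per-pair distance effectively costs $\tilde\Theta(n^2)$, which combined with the $n\times n$ center/client choices is exactly the source of the quartic bound.

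\emph{Step 1 (source problem).} Recall that \EFEEOV asks, given $A_1,A_2,A_3,A_4\subseteq\{0,1\}^{D}$ with $|A_i|=n$ and $D=\poly\log n$, whether
\[
\exists\,a_1\in A_1\ \ \forall\,a_2\in A_2\ \ \exists\,a_3\in A_3\ \ \exists\,a_4\in A_4:\quad \sum_{c\le D} a_1[c]\,a_2[c]\,a_3[c]\,a_4[c]=0 .
\]
I would first record that, under Quantified SETH, this cannot be solved in $O(n^{4-\eps})$ time for any $\eps>0$: this is the higher-alternation analogue of ``SETH $\Rightarrow$ OVH'', proved by the usual split-and-list reduction from $\Sigma_3$-CNF-SAT (split the $N$ variables into four blocks of $N/4$, one per set $A_i$; let coordinates index the $\poly(N)=\poly\log n$ clauses; set $a_i[c]=1$ iff the $i$-th block's partial assignment leaves clause $c$ unsatisfied, so that the $4$-wise inner product vanishes exactly when the four partial assignments jointly satisfy the formula). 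I would invoke this implication as essentially known, modulo the bookkeeping of keeping exactly four blocks so that all four sets have size $n$.

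\emph{Step 2 (the reduction).} Given an \EFEEOV instance, define for $a_1\in A_1$ the vectors $v^{(a_1)}_{a_3}[c]=a_1[c]\,a_3[c]$ ($a_3\in A_3$) and for $a_2\in A_2$ the vectors $u^{(a_2)}_{a_4}[c]=a_2[c]\,a_4[c]$ ($a_4\in A_4$), so that $v^{(a_1)}_{a_3}\perp u^{(a_2)}_{a_4}$ iff $\sum_c a_1[c]a_2[c]a_3[c]a_4[c]=0$. I would then build $O(n)$ strings of a common length $d=\tilde O(n)$ over a fixed alphabet, each composed of an isolated ``type'' region and an isolated ``Backurs--Indyk'' region separated by incompressible gadgets that force optimal alignments to respect the regions: $\sigma_{a_1}$ carries type $\mathtt{L}$ (a block of length $\approx\ell_{\mathrm{BI}}$, the BI-region length) and the BI \emph{left}-encoding of $\{v^{(a_1)}_{a_3}\}_{a_3}$; $\tau_{a_2}$ carries type $\mathtt{R}$ and the BI \emph{right}-encoding of $\{u^{(a_2)}_{a_4}\}_{a_4}$; and one auxiliary string $\sigma_{\top}$ carries type $\mathtt{L}$ and, in its BI region, a block over a fresh alphabet disjoint from everything else. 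Using the Backurs--Indyk guarantee (a threshold $E_0$ depending only on $n,D$, with $\ed$ of a left- and a right-encoding $\le E_0$ iff the corresponding OV instance has an orthogonal pair and $\ge E_0+1$ otherwise, and $E_0<\ell_{\mathrm{BI}}$) and a radius $r$ chosen so that the type regions add a fixed surplus exceeding $E_0$, one checks that $\ed(\sigma_{a_1},\tau_{a_2})\le r$ iff $\exists a_3\,\exists a_4:\sum_c a_1[c]a_2[c]a_3[c]a_4[c]=0$; that all $\sigma$--$\sigma$ distances (including to $\sigma_{\top}$) are $\le r$; and that $\ed(\sigma_{\top},\tau_{a_2})>r$ for every $a_2$ (disjoint alphabets in both regions). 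Hence $P=\{\sigma_{a_1}\}\cup\{\tau_{a_2}\}\cup\{\sigma_{\top}\}$ has $1$-center radius at most $r$ iff some $\sigma_{a_1^\ast}$ is within $r$ of all of $P$ iff $\exists a_1^\ast\,\forall a_2:\ed(\sigma_{a_1^\ast},\tau_{a_2})\le r$ iff the \EFEEOV instance is a YES-instance --- no $\tau_{a_2}$ and not $\sigma_{\top}$ can be the center, each being more than $r$ from $\sigma_{\top}$ resp.\ from every $\tau$. Since $|P|=O(n)$ and $d=\tilde O(n)$ --- and, after scaling $n$ down by a $\poly\log n$ factor, $d=n$ exactly, with general $d=\poly(n)$ obtained by padding with a long common block --- a $1$-center algorithm running in time $n^{3.99}$ would decide \EFEEOV in $n^{4-\Omega(1)}$ time, contradicting Quantified SETH.

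\emph{Main obstacle and remarks.} The two principal ingredients --- Quantified SETH $\Rightarrow$ \EFEEOV, and the Backurs--Indyk gadget --- are essentially off the shelf, so the real content lies in the composition and in the quantitative bookkeeping of the radius: one threshold $r$ must sit strictly between the ``orthogonal'' and ``non-orthogonal'' edit-distance values of the BI gadget, \emph{and} strictly above all $\sigma$--$\sigma$ distances, \emph{and} strictly below $\ed(\sigma_{\top},\tau_{a_2})$, which ultimately rests on the strict inequalities $E_0<\ell_{\mathrm{BI}}$ and $E_0<E_0+1$ together with the choice of a long type region. The part that is genuinely specific to the \emph{discrete} $1$-center problem is precisely this: every input point is simultaneously a candidate center and a point to be covered, so one cannot simply reuse a two-string edit-distance reduction but must additionally force the optimal center to lie among the ``$\sigma$-type'' strings --- the role of the type region and the auxiliary $\sigma_{\top}$ --- and verifying that these do not perturb the $\sigma$--$\tau$ distances, and that optimal alignments cannot cheat across the concatenated gadgets, is where care is needed. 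I would also note that the bound is special to the edit metric: Ulam distance of two length-$n$ strings is near-linear-time computable, so there the inner quadratic factor disappears (consistent with the $\tilde{O}_\eps(nd+n^2\sqrt d)$ approximation for Ulam $1$-center); and the claimed quartic lower bound for $1$-median in the edit metric would follow from the same construction after ``rigidifying'' the gadgets so that $\ed(\sigma_{a_1},\tau_{a_2})$ takes only one of two exact values differing by $1$ and all other pairwise distances are fixed constants, so that $\sum_{a_2}\ed(\sigma_{a_1},\tau_{a_2})$ is small iff every client is covered.
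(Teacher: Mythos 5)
Your proposal is correct and follows essentially the same route as the paper: reduce from \EFEEOV, let the outer $\exists\forall$ become the center/client structure, fold the inner $\exists\exists$ into the Backurs--Indyk/Bringmann--K\"unnemann OV-to-ED gadget by intersecting $S_A$ with each $S_C$ and $S_B$ with each $S_D$, and add type-separating padding plus one auxiliary string to force the center into the facility side. The only superficial difference is bookkeeping: the paper pads with explicit $1^m$/$0^{2m}$/$1^{2m}$ blocks and uses $0^{4m}$ as the auxiliary string (keeping the alphabet binary, which the formal Theorem~\ref{thm:Edit} requires), whereas you sketch ``type regions'' and a fresh-alphabet block --- both achieve the same separations, though the paper's version is cleaner if one insists on $P\subseteq\{0,1\}^n$.
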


 It's worth emphasizing that the above lower bound for the edit metric  is a rare quartic lower bound in fine-grained complexity. It’s true that, conceptually, it’s not unexpected because there’s a quadratic hardness from the 1-center problem and a quadratic hardness from edit distance, so we would expect the combined problem to be quartic. But we find it noteworthy that this actually works on the technical level because complexity theory is full of notorious examples where such “semi-stitching techniques” completely fail (for example KRW games \cite{KRW91}).

Note that we cannot expect such lower bounds for the 1-center problem in $\ell_p$-metrics when $d=n$, as one can compute all pairwise distances within a point-set in subcubic time using fast matrix multiplication.
 
Next, we complement the lower bounds by the following subcubic approximation scheme for the 1-center in the Ulam metric. 
 
\begin{restatable}{theorem}{upperl}
  \label{thm:ulam:upper}
  There exists a $1+\epsilon$ approximation algorithm for the 1-center under Ulam metric that runs in time $\tilde O_{\epsilon}(nd + n^2\sqrt{d})$.
\end{restatable}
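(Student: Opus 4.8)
The plan is to reduce the 1-center problem under the Ulam metric to computing, for every pair $(p_i,p_j)$, a value $\hat D(i,j)$ that is either a $(1+\epsilon)$-approximation of $\mathrm{Ulam}(p_i,p_j)$ or is ``small but honest,'' and to produce each $\hat D(i,j)$ in $\tilde O_\epsilon(\sqrt d)$ time per pair after $\tilde O(nd)$ preprocessing. For the reduction, note that if $\mathrm{Ulam}(p_i,p_j)\le\hat D(i,j)\le(1+\epsilon)\mathrm{Ulam}(p_i,p_j)$ for all $i,j$, then writing $r_i:=\max_j\mathrm{Ulam}(p_i,p_j)$ and $\hat r_i:=\max_j\hat D(i,j)$ one has $r_i\le\hat r_i\le(1+\epsilon)r_i$, so $\hat\imath:=\arg\min_i\hat r_i$ satisfies $r_{\hat\imath}\le(1+\epsilon)\,r^*$ with $r^*:=\min_i r_i$; moreover a pair may be merely under-estimated (say $\hat D(i,j)=0$) as long as its true distance lies strictly below every row maximum, since such a pair cannot change any $\hat r_i$. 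To exploit this slack I would first obtain a factor-$2$ estimate of $r^*$: pick an arbitrary $p_1$, compute $\mathrm{Ulam}(p_1,p_j)$ exactly for all $j$ in total time $\tilde O(nd)$ (each such distance is $d$ minus the longest increasing subsequence of the relative permutation, computable in $\tilde O(d)$ by patience sorting), and set $R:=\max_j\mathrm{Ulam}(p_1,p_j)$; by the triangle inequality $R/2\le r^*\le R$, every $r_i$ lies in $[R/2,2R]$, and every pair achieving a row maximum has distance in $[R/2,2R]$.

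The core tool is a sublinear-time $(1+\epsilon)$-approximate Ulam-distance oracle. Preprocess each string in $O(d)$ time by storing $p_i$ and its inverse (the position of each symbol); this gives, for any pair $(i,j)$, $O(1)$-time query access to the relative permutation $\tau_{ij}=p_j^{-1}\circ p_i$ and to $\tau_{ij}^{-1}$, with $\mathrm{Ulam}(p_i,p_j)=d-\mathrm{LIS}(\tau_{ij})$ up to normalization. Building on the sublinear algorithms for estimating the longest increasing subsequence / distance to monotonicity (Saks--Seshadhri, and Naumovitz--Saks--Seshadhri), one obtains an oracle that, given a threshold $\theta$, either returns a $(1+\epsilon)$-approximation of $\mathrm{Ulam}(p_i,p_j)$ or correctly certifies $\mathrm{Ulam}(p_i,p_j)<\theta$, and runs in time $\tilde O_\epsilon(\sqrt d)$ whenever $\theta=\tilde\Omega(\sqrt d)$. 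In the regime $R=\tilde\Omega(\sqrt d)$ we run this oracle on all $\binom n2$ pairs with threshold $\theta=R/4$, driving the failure probability below $n^{-3}$ by a median of $O(\log n)$ independent runs and union bounding over all pairs; pairs certified small get $\hat D(i,j):=0$. Since every row maximum exceeds $R/2>\theta$, every row-maximizing pair is $(1+\epsilon)$-approximated, so $\hat r_i\in[r_i,(1+\epsilon)r_i]$ for every $i$, and outputting $\arg\min_i\max_j\hat D(i,j)$ solves the problem in total time $\tilde O_\epsilon(nd+n^2\sqrt d)$.

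It remains to treat the complementary regime $R=\tilde O(\sqrt d)$, where the oracle would be too slow (it degrades to roughly $\tilde O_\epsilon(d/\theta)$ for small thresholds). Here the triangle inequality is again decisive: every $p_j$ lies within Ulam distance $2R=\tilde O(\sqrt d)$ of $p_1$, so if the calibration step also records an optimal alignment of each $p_j$ to $p_1$, we may summarize each $p_j$ by a ``deviation profile'' of size $O(R)$ — the $O(R)$ symbols on which $p_j$ differs from $p_1$, together with their positions. For a pair $(i,j)$ the set of symbols fixed in both profiles is already a common subsequence of length $d-O(R)$, and $\mathrm{Ulam}(p_i,p_j)$ is pinned down by a bounded computation involving only the $O(R)$ deviating symbols; carrying this out in $\tilde O(R)$ per pair gives the total bound $\tilde O(nd+n^2R)=\tilde O(nd+n^2\sqrt d)$.

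The step I expect to be the crux is exactly this last one: computing $\mathrm{Ulam}(p_i,p_j)$ from the two size-$O(R)$ deviation profiles in near-linear (rather than quadratic) time in $R$. The difficulty is that an optimal alignment of $p_i$ to $p_j$ need not be consistent with the two recorded alignments to $p_1$ — a symbol fixed relative to $p_1$ in both strings may fall in different inter-deviation gaps of $p_i$ and of $p_j$ and hence may have to be dropped from the common subsequence — so the computation is genuinely a bounded edit-distance computation on the deviation regions rather than a gap-by-gap merge; achieving near-linear dependence on $R$ will likely require a Landau--Vishkin-style diagonal search confined to these regions, supported by longest-common-extension queries from a generalized suffix structure built during the $\tilde O(nd)$ preprocessing, and if only a quadratic-in-$R$ bound is attainable one needs a further idea to cover thresholds in $[d^{1/4},\sqrt d]$. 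A secondary technical point is pinning down the exact query complexity and $\poly(1/\epsilon)$-dependence of the sublinear $\mathrm{LIS}$/monotonicity oracle and handling the boundary between the two regimes cleanly, so that every cost stays within $\tilde O_\epsilon(nd+n^2\sqrt d)$.
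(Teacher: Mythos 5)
Your overall structure is the paper's: a two-regime split on the magnitude of the optimal radius, a calibration step computing exact Ulam distances from a fixed anchor $p_1$ to all strings in $\tilde O(nd)$ total, and, in the large-radius regime, running the Naumovitz--Saks--Seshadhri estimator on every pair with a $\Theta(\sqrt d)$ cutoff. Your observation that pairs certified ``small'' can safely be under-reported, since they cannot realize any row maximum, is correct and matches the role the cutoff plays in the paper.

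The gap is precisely where you flag it: the small-radius regime. You reach for Landau--Vishkin diagonal search with LCE queries and concede you might only get a quadratic-in-$R$ bound; that uncertainty is warranted, and that machinery is not what the paper uses. The paper's resolution is structural: compose the stored transformations $p_1\to p_i$ and $p_1\to p_j$ to get a (suboptimal) transformation $p_i\to p_j$ of size $O(\sqrt d)$; partition the symbol set $[d]$ into \emph{buckets}, i.e., maximal groups of symbols that are consecutive and in the same relative order in \emph{both} $p_i$ and $p_j$ (marking every symbol touched by the suboptimal transformation plus its neighbours in each string leaves $O(\sqrt d)$ such blocks). The key lemma you are missing is that some \emph{optimal} Ulam alignment of $p_i,p_j$ treats each bucket atomically, keeping or deleting it wholesale: if one symbol of a bucket is matched, its bucket-neighbours can be matched too without creating a crossing, since they are adjacent in both strings. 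This lets one replace each bucket by a single weighted character, producing two weighted permutations of length $O(\sqrt d)$, and weighted Ulam (weighted LIS) on length-$k$ permutations is solvable in $\tilde O(k)$ via a Fenwick tree. Crucially, the bucket structure depends only on $(p_i,p_j)$; the anchor alignments serve solely to bound the bucket count, so the consistency worry you raise about the two alignments to $p_1$ never arises. Without this bucketing lemma (or an equivalent $\tilde O(\sqrt d)$-per-pair substitute), the $\tilde O(R)$-per-pair claim in your small-radius regime is unproved.
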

It is worth emphasizing here that for the (discrete) 1-center problem in any metric space, an arbitrary point in the input is a 2-approximate solution. Also note that exact $1$-center in Ulam metric can be solved in $O(n^2d)$ time. It remains an open problem to show a conditional lower bound of $n^{3-o(1)}$ for computing the 1-center in the Ulam metric for $n$ strings each of length $n$. 

Finally, we strengthen some of the lower bounds above, but against a weaker class of algorithms, specifically, against algorithms which list all requisite solutions.
Using the ideas in \cite{W18,C20}, assuming \HSC, we rule out subquadratic algorithms that can list all optimal solutions to the 1-center problem in the Euclidean metric even for very low $d=o(\log{n})$ dimensions. At a high level, this result contrasts with both $\ell_1$ and $\ell_\infty$ metrics where the 1-center in $o(\log{n})$ dimensions can be solved in $n^{1+o(1)}$ time. 

\begin{theorem}[see Theorem~\ref{thm:euclidean} for formal statement]
\label{thm:l2center}
Assuming \HSC,
there is no $n^{2-o(1)}$-time algorithm listing all optimal solutions to the 1-center problem in $7^{\logstar n}$ dimensions in the Euclidean metric. 
\end{theorem}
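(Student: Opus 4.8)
The plan is to reduce the Hitting Set problem to the task of \emph{listing all optimal $1$-centers} in the Euclidean metric, and then to compress the dimension of the produced instance down to $7^{\logstar n}$ using the iterated dimensionality-reduction of~\cite{W18,C20}. Start from a Hitting Set instance $(A,B)$ with $|A|=|B|=n$ and $d=\omega(\log n)$: the question is whether some $a\in A$ has $\langle a,b\rangle\neq 0$ for every $b\in B$, which by \HSC requires $n^{2-o(1)}$ time. We may assume harmlessly that no input vector is all-zero.

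\textbf{The gadget.} For a fixed radius $\tau$ we build a point set $P=\{z\}\cup\{p_a:a\in A\}\cup\{q_b:b\in B\}$ in $\R^{O(d)}$ so that every point of $P\setminus\{z\}$ lies exactly on the sphere of radius $\tau$ centred at $z$, while $\|p_a-q_b\|\le\tau$ iff $\langle a,b\rangle\neq 0$ and $\|p_a-p_{a'}\|\le\tau$ for all $a,a'$. This is an elementary $\ell_2$ realization of a (slightly gap-amplified) Hamming gadget: place each vector, rescaled to unit norm, on a common sphere at a tuned angular offset, so that an inner-product gap of $1$ corresponds to crossing the $60^{\circ}$ angle that realizes the threshold distance $\tau$ on that sphere, using a slightly different offset on the $A$-side than on the $B$-side so that the $p_a$'s become pairwise close while each $p_a$--$q_b$ distance stays governed by $\langle a,b\rangle$; we omit the routine trigonometry. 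Since $z$ lies at distance exactly $\tau$ from every other point, every point of $P$ has maximum distance $\ge\tau$ and $z$ attains it, so the $1$-center value of $P$ is \emph{exactly} $\tau$, regardless of the answer. A point is an optimal $1$-center iff every other point lies within $\tau$ of it: this always holds for $z$; it holds for a given $p_a$ exactly when $\|p_a-q_b\|\le\tau$ for all $b$ (the $p_a$--$p_{a'}$ distances being automatically fine), i.e.\ exactly when $a$ hits $B$; and it may or may not hold for various $q_b$, but this does no harm. Thus the optimal set of $P$ is $\{z\}\cup\{p_a:a\text{ hits }B\}$, possibly together with some $q_b$'s, of total size $O(n)$, and from any listing of it one reads off the Hitting Set answer by checking whether some point of the form $p_a$ appears. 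This is exactly why the statement concerns \emph{listing}: the $1$-center \emph{value} is the constant $\tau$ in both the yes- and no-cases and conveys nothing --- consistent with the $1$-center value being near-linearly computable in such low dimension --- so it is only the identity of the optimizers that encodes the answer.

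\textbf{Compressing the dimension.} As built, $P$ lives in $O(d)=\omega(\log n)$ dimensions, which by itself would only give a listing lower bound in dimension $\omega(\log n)$. To drive the dimension down to $7^{\logstar n}$ we invoke the dimension self-reduction of~\cite{W18,C20} (developed there for the exact closest/furthest pair problems in $\ell_2$): one round replaces a hard instance in dimension $d'$ by an equivalent one of dimension $\poly(\log d')$, at the price of an $n^{o(1)}$ blow-up in the number of points and in the construction time, and --- crucially --- preserving both properties we rely on, namely that the $1$-center value is a fixed constant and that the optimal set still encodes the original Hitting Set answer. Iterating this $\logstar n$ times yields dimension $7^{\logstar n}$ with total blow-up $n^{o(1)}$. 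Hence an $N^{1.99}$-time algorithm listing all optimal $1$-centers in $7^{\logstar n}$ dimensions, where $N=n^{1+o(1)}$ is the size of the final instance, would decide the original Hitting Set instance in $n^{1.99+o(1)}<n^{2-\delta}$ time for some $\delta>0$, contradicting \HSC.

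\textbf{Main obstacle.} The Euclidean realization of the cospherical gadget is routine. The crux is the compression step: one must adapt the recursion of~\cite{W18,C20}, which was designed for \emph{exact} closest/furthest pair, so that each round simultaneously (i) shrinks the dimension, (ii) keeps the $1$-center value pinned to a fixed value, and (iii) keeps the optimal set a faithful encoding of the Hitting Set answer, while the blow-up compounded over all $\logstar n$ rounds remains $n^{o(1)}$.
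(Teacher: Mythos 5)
Your high-level template is right, and your cospherical gadget is essentially the one the paper builds (the paper realizes it cleanly by tensoring the Chen-compressed vectors and lifting them onto a sphere via extra $\sqrt{n^5-\|\cdot\|^2}$ coordinates, rather than by ``tuned angular offsets''). But the compression step --- which you correctly flag as the crux --- is where your plan has a real gap, and it changes the role that listing plays.

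The dimension reduction of \cite{W18,C20} (packaged in the paper as Theorem~\ref{thm:Chen}) does \emph{not} map one instance in dimension $d'$ to one equivalent instance in dimension $\poly(\log d')$. It maps a $\{0,1\}$-vector $x$ to an integer vector $\psi(x)\in\Z^{\ell}$ with $\ell=7^{\logstar n}$, replacing the condition ``$x\cdot y=0$'' by ``$\psi(x)\cdot\psi(y)\in V_{b,\ell}$'' for a \emph{set} of target values $V_{b,\ell}$ of size $n^{o(1)}$. Consequently one builds $|V_{b,\ell}|$ separate 1-center point sets $P_t$, one per $t\in V_{b,\ell}$, and in $P_t$ the point derived from $S_i$ is an $\alpha$-center iff $\psi(S_i)\cdot\psi(T_j)\neq t$ for all $j$. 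Being a hitting set is the conjunction over all $t$, so one must run the listing algorithm on every $P_t$, collect the output sets $\mathcal{O}_t$, and check whether some index $i$ lies in $\bigcap_{t}\mathcal{O}_t$. This intersection over $n^{o(1)}$ many instances is \emph{the} reason the theorem is about listing: a decision oracle (or even a ``best facility'' oracle) run per instance does not let you combine information across the $t$'s. Your stated justification --- that the 1-center value is the same in yes- and no-cases --- would by itself be handled by a facilities/bichromatic decision variant and does not force a listing formulation. You also cannot simply glue the $P_t$'s into one point set with an $n^{o(1)}$ blow-up of points, since the cross-instance distances are uncontrolled and destroy the gadget. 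Finally, the order matters: the paper compresses the raw characteristic vectors \emph{first} and only then builds the cospherical gadget on the compressed integer vectors; trying to compress an already-built cospherical point set, as you propose, is not a well-defined operation.
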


In the same spirit as above, by applying the distributed PCP framework \cite{ARW17,R18} we extend the lower bound in  Theorem~\ref{thm:ulam:center:LB} against approximation algorithms which list all approximately optimal 1-centers. 
\begin{theorem}[see Theorem~\ref{thm:hardnessapprox} for formal statement]
\label{thm:lpinapprox}
Assuming \HSC, there is some $\delta>0$, such that no $n^{2-o(1)}$-time algorithm can given as input a set of points/strings $P$ of dimension/length $d$, list all $(1+\delta)$-approximate solutions to the 1-center problem in the Edit/Ulam/$\ell_p$ metric, where $|P|=n$, $d=\tilde\Omega(\log n)$, and $p\in \mathbb{R}_{\ge 1}\cup\{0\}$.
\end{theorem}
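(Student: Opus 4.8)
The plan is to upgrade the reduction underlying Theorem~\ref{thm:ulam:center:LB} to a gap version by composing it with the distributed PCP / \textsf{MA}-communication framework of \cite{ARW17,R18} (concretely, the Aaronson--Wigderson protocol for Set Disjointness). Recall that the decision reduction maps a Hitting Set instance $(A,B)$ --- where one must decide whether some $a\in A$ satisfies $a\cap b\neq\emptyset$ for all $b\in B$ --- to a point set in which a point $p_a$ is a low-radius center exactly when $a$ is a hitting vector; the obstruction to a gap is that the predicate ``$a\cap b\neq\emptyset$'' carries no slack. Running the \textsf{MA} protocol for disjointness instead replaces each candidate $a$ by a bundle of proof-annotated copies $p_{a,\pi}$, of which only the \emph{honest}-proof copy is guaranteed to behave correctly; in the soundness case there remain ``phantom'' low-radius copies coming from dishonest proofs. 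This is exactly why the argument yields hardness of \emph{listing} all (approximately) optimal centers, not of deciding the optimum: a listing algorithm is forced to expose whether some honest-proof copy is near-optimal, whereas a single-output or decision algorithm may hide behind the phantoms.

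First I would fix a Hitting Set instance with $|A|=|B|=n$ and dimension $d_0=\log^{1.5} n$ (which is $\omega(\log n)$, so \HSC applies, yet small enough that the protocol's proof space has size $n^{o(1)}$), and invoke the \textsf{MA} protocol for ``$a\cap b=\emptyset$'': for each $a$ a proof set $\Pi$ of size $2^{O(\sqrt{d_0}\log d_0)}=n^{o(1)}$ containing an honest proof $\pi^{*}(a)$, for each $b$ a randomness set $R$ of size $\poly(d_0)=n^{o(1)}$, and an acceptance predicate whose acceptance count $\mathsf{count}(a,b,\pi)$ satisfies (completeness) $\mathsf{count}(a,b,\pi^{*}(a))=|R|$ when $a\cap b=\emptyset$, and (soundness) $\mathsf{count}(a,b,\pi)\le|R|/3$ for \emph{every} $\pi$ when $a\cap b\neq\emptyset$. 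Then, reusing the Edit/Ulam/$\ell_p$ gadgets from the proof of Theorem~\ref{thm:ulam:center:LB}, I would build a point set $P$ consisting of a candidate point $p_{a,\pi}$ for every $(a,\pi)$, a test point $q_b$ for every $b$, and one auxiliary ``universal'' point $\zeta$, engineered so that $d(p_{a,\pi},q_b)\le s$ whenever $\mathsf{count}(a,b,\pi)\le|R|/3$, $d(p_{a,\pi},q_b)\ge(1+\delta_0)s$ whenever $\mathsf{count}(a,b,\pi)=|R|$, for an absolute constant $\delta_0>0$ dictated by the gadget, while every candidate--candidate, test--test, and $\zeta$-to-anything distance equals $s$ --- so that $\mathrm{OPT}=s$ in every instance.

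With this table in hand the reduction is immediate. A hitting vector $a^{*}$ gives a radius-$s$ center $p_{a^{*},\pi^{*}(a^{*})}$, whereas for a non-hitting $a$ with witness $b_0$ (so $a\cap b_0=\emptyset$) the honest copy $p_{a,\pi^{*}(a)}$ has radius $\ge(1+\delta_0)s$ because of $q_{b_0}$; fixing any $0<\delta<\delta_0$, the $(1+\delta)$-approximate centers are precisely the points of radius $\le(1+\delta)s<(1+\delta_0)s$. So I would run the hypothetical $N^{1.99}$-time algorithm listing all $(1+\delta)$-approximate centers on $P$ (with $N=|P|=n\,|\Pi|+n+1=n^{1+o(1)}$), and accept iff the returned list meets $\{p_{a,\pi^{*}(a)}:a\in A\}$. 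In the yes case this set intersects the list via $a^{*}$; in the no case no honest copy is $(1+\delta)$-approximate, so it does not --- even though the list may well contain phantom copies $p_{a,\pi}$ (for instance, taking $\pi$ to be the encoding of the all-ones vector makes every $q_b$ accept), which are simply ignored. The total running time is $n^{1.99+o(1)}=O(n^{2-\eps})$, contradicting \HSC. The extension to all $p\in\mathbb{R}_{\ge1}$ is then routine, since on $0/1$-vectors a constant multiplicative Hamming gap is a constant $\ell_p$ gap ($\|x-y\|_p^p$ equals the Hamming distance there), and $p=0$ is Hamming itself.

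The step I expect to be the main obstacle is the metric engineering in the second paragraph: realizing the idealized distance table --- the count-driven gap $[\,s,\,(1+\delta_0)s\,]$ between candidates and tests together with the \emph{exact} value $s$ on all structural pairs and the universal center $\zeta$ --- inside the Edit and Ulam metrics and not merely in Hamming/$\ell_p$. Here I would lean on the gadgets already developed for Theorem~\ref{thm:ulam:center:LB} and on the existing distributed-PCP-for-edit-distance constructions, composing them with the protocol above rather than re-deriving them. A secondary technical point is to pin down the admissible window for $d_0$ so that $|\Pi|=n^{o(1)}$ (hence $N=n^{1+o(1)}$ and the final dimension/length stays $\tilde\Omega(\log n)$) while \HSC remains in force at that dimension.
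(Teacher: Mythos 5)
Your proposal is correct in spirit but takes a genuinely different route from the paper, and it is worth spelling out the contrast.  Both proofs instantiate the distributed-PCP / \textsf{MA}-communication framework, and both exploit the fact that a listing algorithm is forced to reveal whether the \emph{honest}-proof copy of a candidate is near-optimal while phantoms can be ignored --- that shared insight is why both proofs only yield hardness of listing and not of decision.  The key structural difference is in where the prover's message is enumerated.  You fold the proof into the point set: one point $p_{a,\pi}$ per candidate--proof pair, a single run of the listing algorithm on $N=n\,|\Pi|+n+1$ points, and an auxiliary point $\zeta$ that pins $\mathrm{OPT}=s$ so that ``$(1+\delta)$-approximate'' has an absolute meaning.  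The paper instead uses Rubinstein's \emph{derandomized} protocol based on algebraic-geometry codes (Theorem~\ref{thm:ag-code}): a separate point set $P_\omega$ of size $2n+1$ for each codeword $\omega\in\tilde C$, the listing algorithm is run $|\tilde C|=n^{\varepsilon/2}$ times on these small sets, and the algorithm is handed a fixed threshold $\alpha$ rather than being required to compare against $\mathrm{OPT}$.  The yes-case then becomes ``$\exists i\;\forall\omega:\ \tilde s_i^\omega\in\mathcal O_\omega$'', i.e.\ the $\exists\forall$ quantifier of \HSC is realized by intersecting output lists over the external $\omega$-loop, rather than inside a single instance.  The paper's choice buys a tighter parameter regime --- $d=O_\varepsilon(\log n)$ rather than your $\mathrm{poly}(\log n)$ --- and avoids having to force $\mathrm{OPT}$ to a known value.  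Your choice buys a single invocation of the algorithm and a more direct match with the informal ``list all $(1+\delta)$-approximate centers'' phrasing.

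The one genuine gap you flag yourself is real: realizing the idealized distance table (count-driven gap between candidate and test points, with \emph{exact} equal distances on all structural pairs and a universal $\zeta$) requires nontrivial balancing of the encodings.  Exact equality across all candidate--candidate and test--test pairs is more than you need and would be painful to enforce; it suffices to make those distances $\le s$ and to make $\zeta$ equidistant from everything (e.g.\ by using weight-balanced encodings such as $(v,\mathbbm 1 - v, \text{pad})$ so that all points have the same Hamming weight).  The paper's gadget $\tilde s_i^\omega=(s_i^\omega,\mathbbm 1^r-s_i^\omega,\mathbbm 1^{2r})$, $\tilde t_j^\omega=(\mathbbm 1^r-t_j^\omega,t_j^\omega,0^{2r})$ together with the auxiliary $\mathbbm 1^{4r}$ is essentially the object you would need, except it is calibrated to a fixed input threshold $\alpha$ rather than to a pinned $\mathrm{OPT}$.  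If you keep the paper's ``output all points of cost $\le\alpha(1+\delta)$'' formulation instead of insisting on $\mathrm{OPT}=s$, your metric-engineering burden drops substantially, and the Edit/Ulam cases then follow exactly as in the paper from Lemma~\ref{lem:edit} and Lemma~\ref{lem:hamtoulam}.  Finally, a small correction: \HSC is stated with $|U|=c\log n$ for a constant $c$; you do not need $d_0=\omega(\log n)$ for it to ``apply'' --- with $d_0=c\log n$ one already has $|\Pi|=2^{O(\sqrt{\log n}\,\log\log n)}=n^{o(1)}$, so the blow-up to $\log^{1.5}n$ is unnecessary and only worsens the final dimension.
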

One may compute all pairwise distances in $\tilde{O}(n^2)$ time for the inputs in Theorems~\ref{thm:l2center}~and~\ref{thm:lpinapprox}, and then obtain the list of all optimal  and approximately optimal solutions efficiently. Our theorems above say that one cannot do much better. It remains an intriguing open problem to extend the above two conditional lower bounds but against standard decision algorithms. We note that this involves breaking some technical barriers and in particular, developing techniques that go beyond the dimensionality reduction techniques of \cite{W18,C20} and the distributed PCP framework \cite{ARW17,R18} respectively.

We close this subsection by a short discussion about the Discrete 1-median problem in $\ell_p$-metrics and string metrics. For the case when $d=n$, we can prove a result similar to Theorem~\ref{thm:ulameditLB} (see Remark~\ref{rem:1med}). On the other hand for $\ell_p$-metrics, one cannot prove a result similar to Theorem~\ref{thm:ulam:center:LB} for the 1-median problem, because the 1-median problem in Hamming and $\ell_1$-metrics admits a near linear time algorithm and for the Euclidean metric, it is even unclear if the problem is in NP! (see discussion in \cite{GGJ76}.) Also note that by subsampling coordinates, we can approximate 1-median in all $\ell_p$-metrics to $(1+\varepsilon)$ factor, for any $\varepsilon>0$ in near linear time. 

\subsection{Related Work}
We now review the related work on the 1-center problem, and the related
1-median problem.
Both problems may be considered in the \emph{discrete} or \emph{continuous}
settings.  The discrete\footnote{Sometimes called the \textit{medoid}
problem in contrast to \textit{median}, \emph{generalized} median,
or \emph{Steiner} string.} version asks the center or median to be picked
from an input set of points, while in the continuous version, the ``center'' is an
arbitrary element of the metric. See \cite{DBLP:conf/soda/Cohen-AddadSL21} for a discussion on these two settings.

Below, we mainly discuss 1-center problem in stringology metrics as the literature on related work in $\ell_p$ metrics is too vast to survey (but the interested reader may look at \cite{CLMPS16,LiMW02,LLMWZ03,DBLP:journals/mst/FrancesL97} and the references therein).

\paragraph*{Metrics arising in stringology}
We now review results on the 1-center problem in metric spaces arising
from stringology applications.
Let $\Sigma$ be an alphabet, often the binary alphabet $\{0, 1\}$.
Consider the set of strings $\ground = \Sigma^L$ of length $L$, with
a metric distance $\dist : \ground \times \ground \mapsto \reals$. Researchers
have mainly considered the following metrics defined over this space:
\begin{itemize}
  \item \textbf{Edit distance} (ED or Levenshtein distance): The minimum number
  of single-character insertions, deletions, and substitutions
  required to change one string to the other.
  
\item \textbf{Hamming distance or $\ell_1$ over binary alphabets} (HD):
  A special case of edit distance, where only
  substitutions are allowed.
  
  \item \textbf{Ulam distance} (UD): Same as edit distance with the restriction
  that the input strings may not contain any character more than once.

\end{itemize}

For most of the above metrics, one need to incorporate into the running times obtained for
simpler metric such as $\ell_1$ or $\ell_2$ the time it takes to compute the exact or
approximate distance between any two points of the space.
Naumovitz et al.~\cite{naumovitz2017accurate} show how to approximate UD within factor $1+\eps$ in
time $\tilde O(d/\eta + \sqrt d)$ if the distance is $\eta$.  This result is
tight up to log factors.


Turning back to the 1-center and 1-median problem, note the \emph{discrete} versions
can be solved exactly via $O(n^2)$ distance computations, giving trivial $\tilde O(n^2 d)$-time
algorithms for the case of UD.
In Section~\ref{sec:ulam-up}, we show that
this can be improved to
$\tilde O(n^2\sqrt d)$ for $1+\eps$ approximation if we combine two
algorithms~\cite{MS20,naumovitz2017accurate} for computations of UD.  Note that a
2-approximation is trivial, as we can output any string as the hub in the case of 1-center
or a random string in the case of 1-median.

Recently \cite{CDK21} made progress on obtaining better approximation algorithms for
the continuous 1-Median problem in UD, where the median can be picked from anywhere in space,
by presenting the first polynomial-time constant-factor  approximation
algorithm with approximation guarantee smaller than 2 as well as an exact algorithm for
the case where the input contains \emph{three} strings.
They observe that if the average distance to median is $\Omega(d)$,
picking the best string as the median already gives an approximation
better than two.  Now the problem is reduced to the above case if the
total cost is mostly due to a small subset of characters.  Otherwise,
they argue that one can deduce the relative ordering of a good portion
of the optimal median by looking at pairs of characters whose relative
order is consistent in \emph{most} input strings.

Note that in the continuous case, for constant $d$ or constant $n$, the median and center
problems are both solvable in polynomial time for string problems.
De la Higuera and Casacuberta~\cite{HC00} prove that median and center are both \nph.
%
Nicolas and Rivals~\cite{NR03} lift the restrictions and show that median and center
are both \nph\ and \woh\ (when parameterized by $n$, the number of
strings), even for binary alphabets.
%
Prior to these works, \nph ness of median was only known for ED when
the substitutions have specific costs for each pair of
characters~\cite{SP03}.
Li et al.~\cite{LMW02a,LiMW02} give a \ptas\ for the HD 1-center problem,%
\footnote{This is the \textit{closest string} problem.  They also give
a \ptas\ for the \textit{closest substring} problem, which assumes that the cost
of deletions from the input strings ($\infty$ for HD) is zero.}
augmenting the LP-based \ptas\ for super-logarithmic
$d$~\cite{BLPR97}.  The HD 1-center problem  is known to
be \nph~\cite{DBLP:journals/mst/FrancesL97,LLMWZ03}.%
\footnote{Note that median is solvable exactly for HD.}
Previously the best polynomial-time approximation ratio was $\frac43+\eps$
in general~\cite{LLMWZ03,GJL99}, with an exact algorithm known for
constant $d$ (optimal value)~\cite{SBGHM97}. 

\subsection{Organization of the Paper}
In Section~\ref{sec:prelim}, we provide the formal definition of 1-center problem and the various hypotheses used in the paper. 
In
Section~\ref{sec:exactLB}, we provide conditional lower bounds against exact algorithms that compute 1-center when $d=\omega(\log n)$.
Next,
in Section~\ref{sec:ulam-up}, we provide a subcubic approximation algorithm for 1-center in Ulam metric when $d=n$.
Finally, in Section~\ref{sec:inapprox}, we provide some hardness of approximation results (Theorem~\ref{thm:lpinapprox}).

\section{Preliminaries}\label{sec:prelim}

\begin{definition}[Discrete 1-center]
Let $(X,\Delta)$ be a metric space. Given a set of points $P\subseteq X$ in  the metric space, find $x$ in $P$ which minimizes the maximum distance to every other point. 
\end{definition}

Perhaps the most popular assumption for proving conditional lower bounds for polynomial time problems is the Orthogonal Vectors Hypothesis (OVH) that is implied by the Strong Exponential Time Hypothesis (SETH). 
Unfortunately, the logical structure of these problems makes reductions to our 1-center problems difficult.
This was observed already by Abboud, Vassilevska Williams, and Wang \cite{AWW16} in the context of 1-center in \emph{graphs} (known as the Graph Radius problem) and has lead them to introduce the hitting set conjecture (\HSC): a stronger variant of OVH that facilitates reductions to problems with different structure. 
A formal barrier for establishing \HSC (and similarly also any hardness results for 1-center) under SETH was presented by Carmosino et al. \cite{CGIMPS16}.

\begin{definition}[\HSC]
For every $\eps>0$ there exists $c>1$ such that no algorithm running in time $n^{2-\eps}$ can, given as input two collections of $n$-many subsets $\mathcal{A}$ and $\mathcal{B}$ of the universe $U:=[c \log n]$, determine if there exists $S$ in $\mathcal{A}$ which has non-empty intersection with every subset in $\mathcal{B}$. 
\end{definition}

The difference between \HSC and OVH is in the quantifiers: $\exists \forall$ versus $\exists \exists$. 
Studying the \emph{polyline simplification} problem, Bringmann and Chaudhury \cite{BC19} proposed a further strengthening with more quantifiers.
Just like OVH is implied by SETH, an assumption about $k$-SAT, so too can \HSC and its generalizations with more quantifiers be based on the hardness of a quantified version of $k$-SAT; an assumption called \emph{Quantified-SETH}.
Interestingly, the previous papers using Quantified-SETH \cite{BC19,ABHS20} only needed its special case where the quantifier structure is $ \forall \exists$; whereas in this paper we benefit from a $\exists \forall \exists$ structure that has one more alternation. 

The specific hardness assumption (implied by Quantified SETH) that we need is the following; we refer to \cite{BC19,ABHS20} for further discussion on Quantified-SETH and to \cite{AWW16,williams2018some} for further discussion on \HSC and on the need for assumptions with other quantifier structures.

%

\begin{definition}[\EFEEOV]
For every $\eps>0$ there exists $c>1$ such that no algorithm running in time $n^{4-\eps}$ can, given as input four collections of $n$-many subsets $\mathcal{A},\mathcal{B},\mathcal{C},$ and $\mathcal{D}$ of the universe $U:=[c \log n]$, determine if there exists $S_A$ in $\mathcal{A}$ such that for all $S_B$ in $\mathcal{B}$ there exist $S_C \in \mathcal{C}$ and $S_D \in \mathcal{D}$ such that the intersection $S_A \cap S_B \cap S_C \cap S_D=\emptyset$ is empty. 
\end{definition}

\section{Exact Lower Bounds for 1-center}
\label{sec:exactLB} 

In this section, we prove conditional lower bounds for the 1-center problem.
We start with some high-level remarks about the reductions and our contributions.

Previous work (for example \cite{R18,DKL19}) has already designed reductions from SETH and OVH to \emph{closest pair} kinds of questions for the metrics we consider, and our work can be viewed as lifting these results to the 1-center question.
As discussed in Section~\ref{sec:prelim} this requires a new starting assumption (either Quantified SETH or the Hitting Set Conjecture) that has a different structure.
Thus, technically, the main contribution is to adapt the gadgetry of previous work into new reductions with a different structure. In some cases, fundamental difficulties arise and we can only resolve them by requiring that the algorithm lists all solutions. 

 In all our reductions, we first reduce to the \emph{Discrete 1-center with Facilities}, where given  a set of clients $C\subseteq X$ and a set of facilities $F\subseteq X$ in  the metric space, the goal is to find $x$ in $F$ which minimizes the maximum distance to every point in $C$.  We then reduce a hard instance $(F,C)$ of the Discrete 1-center with Facilities problem to an instance $P$ of the standard Discrete 1-center problem (without facilities) by adding a few additional coordinates to points in $F\cup C$ and then introducing a new point/string $s$ such that it is far from every point in $C$ (in comparison to its distance from the points/strings in $F$). Thus we ensured that the 1-center of $P:=F\cup C\cup \{s\}$ must be from $F$. Nevertheless, for the sake of compactness, this two step reduction in the proofs of this section is sometimes written as a one step reduction.

This section is organized as follows.
In Section~\ref{sec:hscellp}, we show the conditional subquadratic time lower bounds for 1-center in various metrics (Theorem~\ref{thm:ulam:center:LB}). Next, in Section~\ref{sec:edit}, we show the conditional subquartic time lower bound for 1-center in edit metric (also Theorem~\ref{thm:ulameditLB}) and explain how to adapt it for 1-median.
Finally, in Section~\ref{sec:euclid}, we show that there are no subquadratic listing algorithms for Euclidean 1-center even in low dimensions (Theorem~\ref{thm:l2center}).

\subsection{Subquadratic Lower Bounds for 1-center when $d=\omega(\log n)$ in String and $\ell_p$-metrics}
\label{sec:hscellp}

In this subsection, we show that  subquadratic time algorithms for 1-center do not exist in $\ell_p$-metrics, Ulam metic, and edit metric, when $d=\omega(\log n)$.

\begin{theorem}[Subquadratic Hardness of 1-center in $\ell_p$-metrics]\label{thm:ellp}
Let $p\in \mathbb{R}_{\ge 1}\cup\{0\}$.
Assuming \HSC, for every $\eps>0$, there exists $c>1$ such that no algorithm running in time $n^{2-\eps}$ can, given as input a point-set $P\subseteq \{0,1\}^d$, solve the discrete 1-center problem in $\ell_p$-metric, where $|P|=n$ and $d=c\log n$.
\end{theorem}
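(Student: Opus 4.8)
The plan is a fine-grained reduction from the Hitting Set problem of \HSC, routed through the ``discrete $1$-center with facilities'' intermediate problem and then de-facilitised by a single extra ``splitter'' point. Fix $\eps>0$ and let $c'$ be the constant \HSC\ supplies for the parameter $\eps/2$, so a Hitting Set instance has universe $U$ of size $c'\log n$ and collections $\mathcal A=(S_1,\dots,S_n)$, $\mathcal B=(T_1,\dots,T_n)$. After an $O(n|U|)$-time cleanup — delete every empty $S_i$; answer ``no'' if some $T_j=\emptyset$ or $\mathcal A$ becomes empty; answer ``yes'' if some $S_i=U$ — I may assume each $S_i$ is a nonempty proper subset of $U$ and each $T_j$ is nonempty. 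I would then use $d=3|U|$ coordinates split into three blocks $U$, $D_A$, $D_B$ with $|D_A|=|D_B|=|U|$, and define $P=\{p_1,\dots,p_n\}\cup\{q_1,\dots,q_n\}\cup\{s\}\subseteq\{0,1\}^d$ (so $|P|=2n+1$) by: $p_i$ carries the characteristic vector of $S_i$ on the $U$-block, a length-$(|U|-|S_i|)$ prefix of ones on $D_A$, and zeros on $D_B$; symmetrically $q_j$ carries $\chi_{T_j}$ on $U$, zeros on $D_A$, and a length-$(|U|-|T_j|)$ prefix of ones on $D_B$; and $s$ encodes the empty set, i.e.\ zeros on $U$ and $D_B$ and all ones on $D_A$. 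Note every $p_i$ and $q_j$ has Hamming weight exactly $|U|$. Here $F=\{p_i\}$ plays the role of facilities, $C=\{q_j\}$ of clients, and $s$ is the splitter.

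The next step is to compute Hamming distances inside $P$ and transfer them to $\ell_p$: for $p\ge 1$ the $\ell_p$-distance of two $0/1$ vectors is the $p$-th root of their Hamming distance, and for $p=0$ it equals their Hamming distance, so in all cases $\ell_p$-distances are a fixed strictly increasing function of Hamming distances. Since $D_A$ and $D_B$ are used by disjoint sides, $\langle p_i,q_j\rangle=|S_i\cap T_j|$, hence $\mathrm{Ham}(p_i,q_j)=2|U|-2|S_i\cap T_j|$, which is exactly $2|U|$ when $S_i\cap T_j=\emptyset$ and at most $2|U|-2$ otherwise. Because two left (resp.\ right) points have overlapping dummy prefixes (of length $\ge 1$, as neither underlying set is $U$), one gets $\langle p_i,p_{i'}\rangle\ge 1$ and $\langle q_j,q_{j'}\rangle\ge 1$, so every intra-$F$ and intra-$C$ distance is at most $2|U|-2$. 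A direct count gives $\mathrm{Ham}(s,p_i)=2|S_i|\le 2|U|-2$ and $\mathrm{Ham}(s,q_j)=2|U|$ — the splitter is close to every facility but sits at the ``disjoint'' distance from every client. Take the threshold $\theta:=2|U|-2$. If some $S_{i^*}$ hits every $T_j$, then $p_{i^*}$ is within Hamming distance $\theta$ of every other point of $P$, so the $1$-center cost of $P$ is $\le\theta$; if instead every $S_i$ misses some $T_j$, then every point of $P$ is at distance $\ge 2|U|=\theta+2$ from some other point (each $p_i$ from its bad $q_j$, each $q_j$ and $s$ from each other), so the $1$-center cost is $\ge 2|U|$. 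Since all Hamming $1$-center costs in this instance are either $\le\theta$ or $\ge\theta+2$, this gives the key equivalence: $P$ has $\ell_p$-$1$-center cost at most $\theta^{1/p}$ (or $\theta$ when $p=0$) if and only if the Hitting Set instance is a ``yes''.

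To conclude, I would assume for contradiction an $N^{2-\eps}$-time exact algorithm for $\ell_p$-$1$-center on $N$ points in dimension $c\log N$, choose the constant $c$ of the theorem to be, say, $3c'+1$, and pad $P$ with additional all-zero coordinates to bring its dimension up to the value required by the assumed algorithm (this is still $3c'\log n+O(\log n)$, i.e.\ $c\log N$ for $N=2n+1$, and adding zero coordinates changes no distance). Running the algorithm on the padded $P$, computing the cost of the returned point in $O(N\log N)$ additional time, and comparing with $\theta^{1/p}$ decides the Hitting Set instance in time $N^{2-\eps}+O(N\log N)=O(n^{2-\eps})=o(n^{2-\eps/2})$, contradicting \HSC. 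I expect the one genuinely delicate ingredient to be the joint design of the equal-weight padding and of the splitter $s$: one needs \emph{all} distances that should be ``small'' (intra-$F$, intra-$C$, $s$-to-$p_i$, and intersecting $p_i$--$q_j$ pairs) to fall at or below $\theta$, while \emph{all} distances that should be ``large'' ($s$-to-$q_j$ and disjoint $p_i$--$q_j$ pairs) strictly exceed it, and this must hold in only $O(\log n)$ dimensions; once $s$ is taken to encode the empty set, the remaining verifications are the routine inner-product computations sketched above.
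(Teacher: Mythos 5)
Your proposal is correct and follows essentially the same route as the paper: reduce from \HSC, encode $\mathcal{A}$ and $\mathcal{B}$ as $0/1$ vectors of uniform Hamming weight so that disjointness shows up as a separated Hamming-distance value, add a ``splitter'' point that is close to all facility points and far from all client points, and transfer to $\ell_p$ via the fact that $\ell_p$ on $\{0,1\}^d$ is a monotone function of Hamming distance. The paper achieves the weight-equalization with explicit complement blocks (in $5|U|+2$ coordinates, needing no preprocessing), while you use variable-length all-ones prefix blocks (in $3|U|$ coordinates) after a short cleanup removing $S_i=\emptyset$ and $S_i=U$; both gadgets are sound and the difference is cosmetic.
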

\begin{proof}
Let $(\mathcal{A}:=(S_1,\ldots ,S_n),\mathcal{B}:=(T_1,\ldots ,T_n),U)$ be an instance arising from \HSC. We construct a point-set $P\subseteq \{0,1\}^d$ where $|P|=2n+1$ and  $d=5\cdot |U|+2$. We build the two maps $\tau_{\mathcal{A}}:\mathcal{A}\to\{0,1\}^d,\tau_{\mathcal{B}}:\mathcal{B}\to\{0,1\}^d$ and a special point $s\in \{0,1\}^{d}$ and the point-set $P$ is then simply defined to be the union of $\{s\}$ and the images (range) of $\tau_{\mathcal{A}}$ and $\tau_{\mathcal{B}}$, i.e., 
\[P:=\{\tau_{\mathcal{A}}(S)\mid S\in\mathcal{A}\}\cup \{\tau_{\mathcal{B}}(T)\mid T\in\mathcal{B}\}\cup\{s\}.\]

Let $U:=\{u_1,\ldots ,u_m\}$. We define our special point $s$ as follows:
\[\forall i\in[5m+2],\ s_i:=
\begin{cases}
0&\text{ if }1\le i\le 3m\\
1&\text{ if }3m+1\le i\le 5m+2
\end{cases}
\]

For any $S\in \mathcal{A}$ we define $\tau_{\mathcal{A}}(S)$ as follows:
\[\forall i\in[5m+2],\ \tau_{\mathcal{A}}(S)_i:=
\begin{cases}
1&\text{ if }u_i\in S \text{ and }1\le i\le m\\
0&\text{ if }u_i\notin S \text{ and }1\le i\le m\\
0&\text{ if }u_{i-m}\in S \text{ and }m+1\le i\le 2m\\
1&\text{ if }u_{i-m}\notin S \text{ and }m+1\le i\le 2m\\
0&\text{ if }2m+1\le i\le 4m+1\\
1&\text{ if }4m+2\le i\le 5m+2
\end{cases}
\]

For any $T\in \mathcal{B}$ we define $\tau_{\mathcal{B}}(T)$ as follows:\allowdisplaybreaks
\[\forall i\in[5m+2],\ \tau_{\mathcal{B}}(T)_i:=
\begin{cases}
1&\text{ if }u_i\in T \text{ and }1\le i\le m\\
0&\text{ if }u_i\notin T \text{ and }1\le i\le m\\
0&\text{ if }m+1\le i\le 2m\\
0&\text{ if }u_{i-2m}\in T \text{ and }2m+1\le i\le 3m\\
1&\text{ if }u_{i-2m}\notin T \text{ and }2m+1\le i\le 3m\\
0&\text{ if }3m+1\le i\le 5m+2
\end{cases}
\]

Notice that for any $S,S'$ in $\mathcal{A}$ and $T$ in $\mathcal{B}$, we have \[\|\tau_{\mathcal{A}}(S)-\tau_{\mathcal{B}}(T)\|_p=\left(|S|+|T|-2\cdot |S\cap T|+m-|S|+m-|T|+m+1\right)^{1/p}=\left(3m+1-2\cdot |S\cap T|\right)^{1/p}.\]
\[
\|\tau_{\mathcal{A}}(S)-\tau_{\mathcal{A}}(S')\|_p\le \left(2m\right)^{1/p}.
$$
$$\|\tau_{\mathcal{A}}(S)-s\|_p= \left(2m+1\right)^{1/p}.$$
$$\|\tau_{\mathcal{B}}(T)-s\|_p= \left(3m+2\right)^{1/p}.\]

Suppose there exists $S$ in $\mathcal{A}$ such that it intersects with every subset $T$ in $\mathcal{B}$ then $\tau_{\mathcal{A}}(S)$ has  distance strictly less than $(3m+1)^{1/p}$ with $\tau_{\mathcal{B}}(T)$ for every $T$ in $\mathcal{B}$. Additionally,  $\tau_{\mathcal{A}}(S)$ has distance at most $(2m)^{1/p}$ with $\tau_{\mathcal{A}}(S')$ for any $S' \in \mathcal{A}$ and distance  $(2m+1)^{1/p}$ with $s$. Therefore, $\tau_{\mathcal{A}}(S)$ is at distance at most $(3m)^{1/p}$ from every point in $P$.

On the other hand, if for every $S$ in $\mathcal{A}$ there exists $T$ in $\mathcal{B}$ such that $S$ and $T$ are disjoint, then we show that for any point $x$ in $P$ there is a point $y$ in $P$ such that $\|x-y\|_p\ge (3m+1)^{1/p}$. Suppose $x:=\tau_{\mathcal{B}}(T)$ for some $T\in {\mathcal{B}}$ then we have $x$ is at distance $(3m+2)^{1/p}$ from $s$. Similarly if $x:=s$ then it is at distance $(3m+2)^{1/p}$ from every $\tau_{\mathcal{B}}(T)$ for all $T\in {\mathcal{B}}$. Finally, if  $x:=\tau_{\mathcal{A}}(S)$ for some $S\in {\mathcal{A}}$ then from the soundness assumption we have that there exists $T$ in $\mathcal{B}$ such that $S$ and $T$ are disjoint. Thus, $x$ is at distance $(3m+1)^{1/p}$ from $\tau_{\mathcal{B}}(T)$.
\end{proof} 
 
 \begin{remark}\label{rem:inf}
 For the $\ell_{\infty}$-metric, we can solve Discrete 1-center problem in $O(n d^2)$ time as follows. Given input point-set $P$, for every coordinate $i\in [d]$, determine a farthest pair of points $(a_i,b_i)$ in the point-set when restricted to that coordinate. Note that discrete 1-center cost of $P$ is equal to the cost of the discrete 1-center of the point-set $\{a_1,...,a_d,b_1,....,b_d\}$ when the center can be picked anywhere in $P$.  Thus we can solve discrete 1-center in the $\ell_{\infty}$-metric in $O(n d^2)$ time, which is near linear time as long as $d = n^{o(1)}$.
 \end{remark}
 
 The quadratic lower bound for 1-center in Ulam metric follows from the below lemma.

\begin{lemma}\label{lem:hamtoulam}
Let $\Pi_d$ denote the set of all permutations over $[d]$.
For every $d\in\mathbb{N}$, there is a function $\eta:\{0,1\}^d\to \Pi_{2d}$,  such that for all $a,b\in\{0,1\}^d$ the following holds:
\[
\ed(\eta(a),\eta(b))= 2\cdot \|a-b\|_0.
\]
Moreover, for any $a\in \{0,1\}^d$, $\eta(a)$ can be computed in $O(d)$ time.
\end{lemma}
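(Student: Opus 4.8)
The plan is to define $\eta$ so that each bit of the input string is encoded by a small local ``gadget'' permutation on two consecutive symbols, chosen from a fixed common alphabet of size $2d$ so that the outputs are genuinely permutations of $[2d]$ (Ulam strings). Concretely, I would partition the symbol set $[2d]$ into $d$ blocks $B_i=\{2i-1,2i\}$ for $i\in[d]$, and for $a\in\{0,1\}^d$ set $\eta(a)$ to be the permutation that, on block $i$, writes the two symbols of $B_i$ in increasing order if $a_i=0$ and in decreasing order if $a_i=1$; the blocks are concatenated in order $i=1,\dots,d$. This is clearly computable in $O(d)$ time, and $\eta(a)$ is a permutation of $[2d]$ for every $a$, so it lies in $\Pi_{2d}$ as required.

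The key step is the distance computation $\ed(\eta(a),\eta(b))=2\|a-b\|_0$. For the upper bound, note that wherever $a_i=b_i$ the two strings agree on block $i$, and wherever $a_i\ne b_i$ the block of $\eta(a)$ is the reversal of the block of $\eta(b)$, which can be fixed with two substitutions (or one deletion plus one insertion); summing over the $\|a-b\|_0$ differing coordinates gives $\ed(\eta(a),\eta(b))\le 2\|a-b\|_0$. For the lower bound I would use the fact that Ulam distance equals $2d$ minus twice the length of the longest common subsequence (LCS) of the two permutations, i.e. $\ed(\eta(a),\eta(b)) = 2(d-\mathrm{LCS}(\eta(a),\eta(b)))$ when both strings have the same length $2d$ and are permutations of the same set. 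Hence it suffices to show $\mathrm{LCS}(\eta(a),\eta(b)) = d - \|a-b\|_0$, i.e. that one cannot do better than keeping one symbol from each ``bad'' block and both symbols from each ``good'' block. On agreeing blocks we trivially keep both symbols. The crux is an interleaving argument: because the blocks appear in the same order in both strings and each block's two symbols are a disjoint pair $B_i$, any common subsequence decomposes into contributions from each block, and from a block where the orders disagree one can retain at most one of the two symbols $\{2i-1,2i\}$ in a common increasing-position matching — so the LCS is at most $\sum_i (\text{2 if }a_i=b_i,\ 1\text{ otherwise}) = d + (d-\|a-b\|_0) - d$... wait, rather $= 2\cdot|\{i:a_i=b_i\}| + 1\cdot|\{i:a_i\ne b_i\}| = 2d - \|a-b\|_0$, giving $\ed = 2\cdot 2d - 2(2d-\|a-b\|_0)$? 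Let me restate cleanly in the writeup: with strings of length $2d$, $\ed=2(2d - \mathrm{LCS})$ is wrong; the correct identity for equal-length permutations is $\ed(\pi,\sigma)=2d-\mathrm{LCS}(\pi,\sigma)$ only up to a factor, so I will instead argue directly that a smallest edit script deletes one symbol and inserts one symbol per differing block and never needs to touch agreeing blocks, and prove optimality via the LCS characterization $\mathrm{LCS}(\eta(a),\eta(b)) = 2d - \|a-b\|_0$, whence $\ed = 2\cdot(2d - \mathrm{LCS}) / 1$... I will pin down the exact constant in the final text.

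I expect the main obstacle to be the lower bound on the edit distance, i.e. proving that the $2\|a-b\|_0$ edits are necessary, not just sufficient. The clean way is to invoke the standard fact that for two permutations of the same $N$-element set, Ulam distance $=2(N-\mathrm{LCS})$ (since an optimal Ulam alignment consists of LCS-many keeps, $N-\mathrm{LCS}$ deletions and $N-\mathrm{LCS}$ insertions, and substitutions are never helpful between two permutations of the same set), reducing everything to the combinatorial claim $\mathrm{LCS}(\eta(a),\eta(b))=N-\|a-b\|_0$ with $N=2d$. That claim follows because a common subsequence must match symbols in a way that respects the block structure — symbol $2i-1$ precedes $2i$ in one string but follows it in the other exactly on the differing blocks — so at most one symbol per differing block can be matched, while on agreeing blocks both can be and the per-block choices are independent and simultaneously realizable by taking the obvious block-by-block alignment. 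Assembling these pieces gives $\ed(\eta(a),\eta(b))=2(2d-(2d-\|a-b\|_0))=2\|a-b\|_0$, as claimed. $\qed$
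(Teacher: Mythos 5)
Your construction is exactly the one in the paper: block $i$ occupies positions $2i-1,2i$ of the output, uses the symbol pair $\{2i-1,2i\}$, and is written in increasing or decreasing order according to the bit $a_i$. The paper's own correctness argument is only the one-line remark that ``the optimal distance is obtained by swapping which amounts to two edit operations,'' so your attempt to pin this down via the LCS characterization is actually more careful than what the paper writes. The block structure argument you sketch --- any common subsequence decomposes block by block because the blocks occupy the same positions in both strings and use disjoint symbol sets, so the LCS is $2\cdot|\{i:a_i=b_i\}|+1\cdot|\{i:a_i\ne b_i\}|=2d-\|a-b\|_0$ --- is correct and is the right way to make the lower bound rigorous.

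There is, however, a genuine flaw in the step where you pass from LCS to edit distance. You assert that ``substitutions are never helpful between two permutations of the same set,'' i.e.\ that $\ed(\pi,\sigma)=2(N-\mathrm{LCS}(\pi,\sigma))$ for permutations of $[N]$ with substitutions permitted. That is false. Take $d=2$, $a=(0,0)$, $b=(1,1)$, so $\eta(a)=(1,2,3,4)$ and $\eta(b)=(2,1,4,3)$. With substitutions allowed, the alignment
\[
\begin{array}{ccccc}
- & 1 & 2 & 3 & 4\\
2 & 1 & 4 & 3 & -
\end{array}
\]
costs $3$ (one insertion, one substitution $2\to 4$, one deletion), which is strictly less than $2\|a-b\|_0=4$. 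So the identity you invoke, and indeed the lemma's equality itself, holds only if $\ed$ on permutations is interpreted as the insertion/deletion-only edit distance --- which is the standard convention for the Ulam metric, since a substitution would create a repeated symbol. That is the interpretation the paper implicitly uses, and under it your chain $\ed(\eta(a),\eta(b))=2(2d-\mathrm{LCS})=2\|a-b\|_0$ is correct. You should state this convention explicitly rather than claiming substitutions cannot help, and you should also clean up the middle of your writeup, where you visibly waver between $\ed=2(N-\mathrm{LCS})$ and $\ed=N-\mathrm{LCS}$ (the latter is the ``moves'' version of Ulam distance, which would give $\ed=\|a-b\|_0$, off by the factor of $2$ the lemma asserts).
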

\begin{proof}
Let $a\in\{0,1\}^d$. We define $\eta(a)$ as follows:
\[
\forall i\in [2d],\ \eta(a)[i]=\begin{cases}
i &\text{ if }i=2k-1\text{ and }a_k=0\text{ for some }k\in \mathbb{N}\\
i &\text{ if }i=2k\text{ and }a_k=0\text{ for some }k\in \mathbb{N}\\
i+1 &\text{ if }i=2k-1\text{ and }a_k=1\text{ for some }k\in \mathbb{N}\\
i-1 &\text{ if }i=2k\text{ and }a_k=1\text{ for some }k\in \mathbb{N}
\end{cases}
\]

Fix some $k\in [d]$ and $a,b\in\{0,1\}^d$. If $a_k=b_k$ then notice that $\eta(a)[2k]=\eta(b)[2k]$ and $\eta(a)[2k-1]=\eta(b)[2k-1]$. If $a_k\neq b_k$ then $\eta(a)[2k]=\eta(b)[2k-1]$ and $\eta(a)[2k-1]=\eta(b)[2k]$. Since the characters do not repeat, we have that the optimal distance is obtained by swapping which amounts to two edit operations. 
\end{proof}
 
 \begin{corollary}[Subquadratic Hardness of 1-center in Ulam metric]\label{cor:ulamsubquad}
 Assuming \HSC, for every $\eps>0$ there exists $c>1$ such that no algorithm running in time $n^{2-\eps}$ can given as input a set $P$ of $n$ many permutations of $[d]$, solve the discrete 1-center problem in Ulam metric, where $|P|=n$ and $d=c\log n$.
  \end{corollary}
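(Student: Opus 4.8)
The plan is to derive Corollary~\ref{cor:ulamsubquad} by composing the Hamming-metric hardness of Theorem~\ref{thm:ellp} with the embedding of Lemma~\ref{lem:hamtoulam}. Concretely, start from an instance $(\mathcal{A},\mathcal{B},U)$ arising from \HSC and apply the reduction of Theorem~\ref{thm:ellp} with $p=0$ (Hamming distance), obtaining a point-set $P'\subseteq\{0,1\}^{d'}$ with $|P'|=2n+1$, $d'=5|U|+2=\Theta(\log n)$, such that the Hamming $1$-center cost of $P'$ is $\le 3m$ in the YES case and $\ge 3m+1$ in the NO case. Then push this through $\eta$ from Lemma~\ref{lem:hamtoulam}: define $Q:=\{\eta(x)\mid x\in P'\}\subseteq\Pi_{2d'}$, a set of $2n+1$ permutations of $[2d']$ with $2d'=\Theta(\log n)$.

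The key step is the observation that $\eta$ scales all pairwise distances by exactly a factor of $2$: for all $a,b\in\{0,1\}^{d'}$ we have $\ed(\eta(a),\eta(b))=2\|a-b\|_0$, and since the $\eta(a)$ are permutations this edit distance is the Ulam distance. Hence for any $x\in P'$, $\max_{y\in P'}\|x-y\|_0 = \tfrac12\max_{y\in P'}\mathrm{UD}(\eta(x),\eta(y))$, so the Ulam $1$-center cost of $Q$ is exactly twice the Hamming $1$-center cost of $P'$, and the optimal centers correspond under $\eta$. Therefore an algorithm deciding whether the Ulam $1$-center cost of $Q$ is $\le 6m$ versus $\ge 6m+2$ decides the original \HSC instance. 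The construction of $Q$ takes $O(n\cdot d')=\tilde O(n)$ time by Lemma~\ref{lem:hamtoulam}, and the alphabet/length bound $d=c\log n$ follows by absorbing the constant $2$ and the $5|U|+2$ bound into a new constant $c$.

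Putting it together: if some algorithm solved discrete $1$-center in the Ulam metric on $n$ permutations of $[c\log n]$ in time $n^{2-\eps}$, then feeding it $Q$ (with the appropriate threshold, which is known from the construction) would decide \HSC in time $n^{2-\eps}+\tilde O(n)=n^{2-\eps'}$, contradicting \HSC. One should double-check that the standard $1$-center output (the minimizing point itself) lets us recover the decision by simply evaluating its cost in $\tilde O(nd)=\tilde O(n\log n)$ time — so having an optimization oracle rather than a decision oracle costs nothing.

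I do not anticipate a genuine obstacle here: all the work is already done in Theorem~\ref{thm:ellp} and Lemma~\ref{lem:hamtoulam}, and the only thing to verify carefully is that no spurious point of $Q$ becomes a valid center in the NO case — but this is immediate since $\eta$ is a bijective correspondence that multiplies every distance, hence every $1$-center cost, by exactly $2$, so the YES/NO gap is preserved verbatim. The mildly delicate bookkeeping point is just to confirm that $p=0$ is an allowed value in Theorem~\ref{thm:ellp} (it is, since $p\in\mathbb{R}_{\ge 1}\cup\{0\}$) and that $\|\cdot\|_0$ there denotes Hamming distance, so that the $(3m+1-2|S\cap T|)^{1/p}$-type expressions are read as the Hamming counts $3m+1-2|S\cap T|$, etc.
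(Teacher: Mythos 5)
Your proposal is correct and is exactly the argument the paper intends: the paper states Lemma~\ref{lem:hamtoulam} immediately before Corollary~\ref{cor:ulamsubquad} precisely so that the corollary follows by composing the $p=0$ case of Theorem~\ref{thm:ellp} with the embedding $\eta$ (which, being injective and scaling every pairwise distance by exactly $2$, scales the $1$-center cost by $2$ and preserves the YES/NO gap). The paper itself omits the explicit write-up; you have filled it in correctly, including the minor bookkeeping about recovering a decision from an optimization oracle and absorbing constants into $c$.
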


  The quadratic lower bound for 1-center in Edit metric follows from the below lemma.

\begin{lemma}\label{lem:hamtoedit}
For every $d\in\mathbb{N}$, there is a function $\eta:\{0,1\}^d\to \{0,1\}^{d'}$,  such that for all $a,b\in\{0,1\}^d$ the following holds:
\[
\ed(\eta(a),\eta(b))= \|a-b\|_0.
\]
Moreover, for any $a\in \{0,1\}^d$, $\eta(a)$ can be computed in $O(d \log d)$ time.
\end{lemma}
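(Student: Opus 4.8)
The plan is to mimic the permutation construction of Lemma~\ref{lem:hamtoulam}, but now over the binary alphabet where characters are allowed to repeat, so that a single bit flip in $a$ translates into exactly \emph{one} edit operation (rather than two) between $\eta(a)$ and $\eta(b)$. The natural idea is to encode each bit $a_k$ by a short block of symbols, interleaved with ``separator'' blocks that are identical across all strings and serve to prevent edit operations from cheaply matching material across block boundaries. Concretely, I would fix a small constant-size gadget: let the $k$-th bit be represented by a block $B_0$ if $a_k=0$ and by a block $B_1$ if $a_k=1$, where $B_0$ and $B_1$ are chosen so that $\ed(B_0,B_1)=1$ (for instance $B_0$ and $B_1$ differing in a single extra character, i.e.\ an insertion/deletion gadget, or differing in one substituted symbol). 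Between consecutive bit-blocks I would insert a long enough run of a fixed ``padding'' pattern $R$ — long relative to the block size, e.g.\ of length $\Theta(\log d)$ — whose purpose is to force any optimal alignment of $\eta(a)$ with $\eta(b)$ to align the $j$-th padding block of one string with the $j$-th padding block of the other. The total length is then $d' = \Theta(d\log d)$, which matches the claimed bound, and $\eta(a)$ is clearly computable in $O(d\log d)$ time by concatenation.

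The key steps, in order: (i) define the blocks $B_0,B_1,R$ and the map $\eta$ as above; (ii) prove the \emph{upper bound} $\ed(\eta(a),\eta(b))\le \|a-b\|_0$ by exhibiting an explicit edit script — for each coordinate $k$ with $a_k\neq b_k$, apply the single edit operation witnessing $\ed(B_0,B_1)=1$ inside the $k$-th block, leaving padding blocks and agreeing bit-blocks untouched; (iii) prove the matching \emph{lower bound} $\ed(\eta(a),\eta(b))\ge \|a-b\|_0$ via a ``structure of optimal alignment'' argument: show that any edit script of cost less than $\|a-b\|_0$ could be converted, without increasing its cost, into one that respects the block decomposition (the long identical padding runs act as anchors, since shifting an alignment across a padding block of length $\Theta(\log d)$ costs $\Omega(\log d)>$ the total budget we care about), and then observe that a block-respecting script must pay at least $1$ for every coordinate where the bit-blocks differ. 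Combining (ii) and (iii) gives the exact equality.

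The main obstacle is step (iii), the lower bound — specifically, making rigorous the claim that optimal alignments ``respect the block structure.'' The subtlety is that edit distance, unlike Hamming distance, allows shifts, so one must argue that using the padding runs as anchors is essentially forced: one needs the padding length to dominate the quantity $\|a-b\|_0$ could be as large as $d$, so a \emph{fixed} $\Theta(\log d)$-length padding is not obviously enough to anchor against a budget of size $d$. The cleaner route is a local/exchange argument that does not bound the global budget at all: show that in \emph{any} alignment, within each window consisting of a bit-block flanked by its two neighboring padding blocks, the contribution to the cost is at least $\1[a_k\neq b_k]$, by a direct combinatorial case analysis on how the alignment can enter and leave that window; summing over $k$ gives $\|a-b\|_0$. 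I would choose the gadgets $B_0,B_1,R$ (and a third ``guard'' symbol not appearing in them) precisely to make this local case analysis clean — e.g.\ so that $R$ uses a dedicated symbol, forcing padding-to-padding matches and isolating each bit-block. A convenient alternative, if available from prior work, is to cite a known Hamming-to-edit embedding with distortion exactly $1$ and length $O(d\log d)$; but absent that, the explicit gadget construction above with the localized lower-bound argument is the approach I would carry out.
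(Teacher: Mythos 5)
Your construction with a single fixed padding pattern $R$ repeated between all bit-blocks does not satisfy the claimed identity, and the ``local window'' argument you propose in step~(iii) cannot rescue it. The problem is exactly the one you flag and then try to sidestep: a fixed repeated pad makes $\eta(a)$ nearly periodic, and edit distance exploits periodicity via cheap shifts. Concretely, take $a = 0101\cdots01$ and $b = 1010\cdots10$, so $\|a-b\|_0 = d$. Then $\eta(a) = B_0 R B_1 R B_0 R \cdots$ and $\eta(b) = B_1 R B_0 R B_1 R \cdots$; deleting the leading $B_0 R$ from $\eta(a)$ and appending one block at the end turns it into $\eta(b)$ at total cost $O(|B|+|R|) = O(\log d) \ll d$. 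Hence $\ed(\eta(a),\eta(b)) \ne \|a-b\|_0$. Your local decomposition implicitly assumes the optimal alignment pairs the $k$-th window of $\eta(a)$ with the $k$-th window of $\eta(b)$; the shift alignment above pairs the $k$-th window with the $(k+1)$-th at zero cost, so the per-window contribution is $0$ even though $a_k \neq b_k$. Identical padding makes shifting \emph{cheap}, not expensive, so it cannot serve as an anchor. (A dedicated guard symbol in $R$ does not help either, since after a one-block shift the guards still line up; and in any case the target alphabet is required to be binary.)

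The missing idea is that the separators must encode \emph{position}. The paper's construction uses $d$ \emph{distinct} random strings $l_1,\ldots,l_d$, each of length $\Theta(\log d)$, so that with high probability every pair $l_i,l_j$ with $i\ne j$ has Hamming and edit distance $\Omega(\log d)$; block $k$ of $\eta(a)$ is then $a_k \circ l_k$. Now a shift is genuinely expensive: matching block $i$ of $\eta(a)$ to block $j\ne i$ of $\eta(b)$ forces payment $\Omega(\log d)$ because $l_i$ and $l_j$ are far apart. The paper's lower-bound argument runs through this: a cost $<\|a-b\|_0$ leaves more than $d-\|a-b\|_0$ blocks of $\eta(a)$ matched verbatim to length-$(10\log d+1)$ substrings of $\eta(b)$, and the pairwise separation of the $l_i$ forces each such verbatim match to be block $i$ to block $i$ with $a_i=b_i$, a contradiction. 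Your bit-gadget idea ($B_0,B_1$ with $\ed(B_0,B_1)=1$) and the upper-bound script in step~(ii) are both fine; what needs to change is step~(i): replace the single repeated $R$ with position-dependent, pairwise-far separators, and then the lower bound follows as in the paper rather than via a window-by-window decomposition.
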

\begin{proof}
	Let $l_1, l_2, \ldots, l_d$ be $d$ strings of length $10 \log d$ each made by realizing $10 \log d$ $0/1$ bits uniformly at random. It follows that with high probability, the hamming distance as well as the edit distance of each pair $l_i, l_j$ ($i \neq j$) is $\Omega(\log d)$~\cite{kiwi2009speculated}. For a string $a$, we define $\eta(a)$ in the following way: we make a string of size $d (10 \log d + 1)$ which consists of $d$ consecutive blocks. Each block $i$ starts with $a_i$ and is followed by $l_i$. By putting all the blocks next to each other we obtain a string of size $d (10 \log d + 1)$ which we denote by $\eta(a)$. We prove in the following that $\ed(\eta(a),\eta(b))= \|a-b\|_0$ holds for each pair of strings $a$ and $b$.

$\ed(\eta(a),\eta(b)) \leq  \|a-b\|_0$ immediately follows from the fact that by only toggling the first characters of some block of $\eta(a)$ we can turn $\eta(a)$ into $\eta(b)$ and this transformation only costs $ \|a-b\|_0$. Note that we only toggle the first characters of the blocks whose corresponding characters in $a$ and $b$ are not the same.

Now, assume for the sake of contradiction that $\ed(\eta(a),\eta(b)) <  \|a-b\|_0$ holds. This implies that for at least $d-\|a-b\|_0 + 1$ many blocks of $a$, the transformation cost is 0. In other words, for each of these blocks, there is a substring of length $10 \log d +1$ in $\eta(b)$ which is completely the same as that block. Since the blocks are generated randomly, this can only happen if for some $i$, the $i$'th block of $\eta(a)$ is transformed into the $i$'th block of $\eta(b)$ and $a_i = b_i$. Thus, this implies that for at least $d - \|a-b\|_0 + 1$ different values of $i$ we have $a_i = b_i$ which is contradiction.
\end{proof}
\begin{corollary}[Subquadratic Hardness of 1-center in Edit metric]\label{cor:editsubquad}
Assuming \HSC, for every $\eps>0$ there exists $c>1$ such that no algorithm running in time $n^{2-\eps}$ can given as input a point-set $P\subseteq \{0,1\}^d$ solve the discrete 1-center problem in edit metric, where $|P|=n$ and $d=c\log n\log\log n$.
\end{corollary}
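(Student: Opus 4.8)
The plan is to obtain Corollary~\ref{cor:editsubquad} by feeding the Hamming-metric hard instance of Theorem~\ref{thm:ellp} through the exact embedding $\eta$ of Lemma~\ref{lem:hamtoedit}. I would start from an \HSC instance $(\mathcal{A},\mathcal{B},U)$ with $|U|=m=c'\log n$ and apply Theorem~\ref{thm:ellp} with $p=1$; since on $\{0,1\}^{d_0}$ the $\ell_1$ distance coincides with the Hamming distance, this produces a point-set $P'\subseteq\{0,1\}^{d_0}$ with $|P'|=2n+1$ and $d_0=5m+2=\Theta(\log n)$ such that: if some $S\in\mathcal{A}$ hits every $T\in\mathcal{B}$ then the optimal discrete 1-center cost of $P'$ (in Hamming distance) is at most $3m$, and otherwise it is at least $3m+1$.

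Next, I would apply the map $\eta\colon\{0,1\}^{d_0}\to\{0,1\}^{d'}$ of Lemma~\ref{lem:hamtoedit} to every point of $P'$ and set $P:=\{\eta(x):x\in P'\}$, where $d'=d_0(10\log d_0+1)=\Theta(\log n\log\log n)$; appending a common suffix to all strings, we may assume $d'=c\log n\log\log n$ for a suitable constant $c$, since a shared suffix does not change pairwise edit distances. Because Lemma~\ref{lem:hamtoedit} gives $\ed(\eta(x),\eta(y))=\|x-y\|_0$ for all $x,y\in\{0,1\}^{d_0}$ --- with high probability over the random separating strings used in the construction, or deterministically using an explicit family of $O(\log d_0)$-bit strings with pairwise Hamming and edit distance $\Omega(\log d_0)$ --- the farthest-point function is preserved exactly: $\max_{y\in P'}\ed(\eta(x),\eta(y))=\max_{y\in P'}\|x-y\|_0$ for every $x$. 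Hence the optimal discrete 1-center cost of $P$ in the edit metric equals that of $P'$ in the Hamming metric, and the same threshold ($\le 3m$ versus $\ge 3m+1$) decides the \HSC instance.

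Finally, I would chain the running times: building $P'$ and then $P$ takes $n\cdot\poly\log n$ time (each $\eta(x)$ is computed in $O(d_0\log d_0)$ time by Lemma~\ref{lem:hamtoedit}), so if there were an algorithm solving discrete 1-center in the edit metric on $N$ strings of length $d$ in time $N^{2-\eps}$, then running it once on $P$ (with $N=2n+1$ strings of length $d=\Theta(\log n\log\log n)$) and comparing the returned cost to $3m$ would decide \HSC in $n^{2-\eps+o(1)}<n^{2-\eps/2}$ time, a contradiction.

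There is no serious obstacle here beyond what Theorem~\ref{thm:ellp} and Lemma~\ref{lem:hamtoedit} already provide; the only points that deserve care are (i) that the $\ed$-versus-Hamming identity of Lemma~\ref{lem:hamtoedit} is \emph{exact}, which is what lets the additive gap of $1$ survive so that no approximate distance oracle is needed, and (ii) the bookkeeping of the length blow-up $d_0\mapsto d_0\log d_0$, which is exactly where the extra $\log\log n$ factor in the final dimension comes from. If a deterministic reduction is desired, the minor subtlety is replacing the random strings $l_i$ in Lemma~\ref{lem:hamtoedit} by an explicit such family, which standard insertion--deletion code constructions supply.
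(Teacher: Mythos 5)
Your proposal is correct and is exactly the intended derivation: apply Theorem~\ref{thm:ellp} with $p=0$ (equivalently $p=1$ on binary vectors) to get a Hamming-metric hard instance with a gap of $3m$ versus $3m+1$ in dimension $\Theta(\log n)$, then push it through the exact isometric embedding of Lemma~\ref{lem:hamtoedit} into edit metric, incurring a $\Theta(\log\log n)$ blow-up in length, which is precisely what the dimension bound $d=c\log n\log\log n$ reflects. The paper leaves this composition implicit; your added observations about needing the embedding to be exact (so the additive gap of $1$ survives) and about derandomizing the separator strings $l_i$ in Lemma~\ref{lem:hamtoedit} via explicit insertion--deletion codes are both accurate and worthwhile points of care.
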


Next we prove much higher lower bounds for the Edit metric when $d$ is larger.

\subsection{Subquartic Lower Bound for 1-center when $d=n$ in Edit metric}\label{sec:edit}

We now present our lower bound under Quantified SETH which offers a conceptual novelty since as discussed in Section~\ref{sec:prelim} it is the first time (to our knowledge) that more than two quantifier alternations are utilized.

\begin{theorem}[Subquartic Hardness of 1-center in Edit metric]\label{thm:Edit}
Assuming Quantified SETH, for every $\eps>0$ no algorithm running in time $n^{4-\eps}$ can given as input a point-set $P\subseteq \{0,1\}^n$ solve the discrete 1-center problem in edit metric, where $|P|=n$.
\end{theorem}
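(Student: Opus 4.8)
The goal is to reduce an instance of $\exists\forall\exists\exists\mathsf{OVH}$ — four collections $\mathcal{A},\mathcal{B},\mathcal{C},\mathcal{D}$ of $n$ subsets of a universe $U=[c\log n]$, asking whether $\exists S_A\,\forall S_B\,\exists S_C\,\exists S_D$ with $S_A\cap S_B\cap S_C\cap S_D=\emptyset$ — to a Discrete 1-center instance in the edit metric over binary strings of length $n$, so that a subquartic algorithm for the latter refutes Quantified SETH. Following the two-step template announced at the start of Section~\ref{sec:exactLB}, I would first build a \emph{Discrete 1-center with Facilities} instance $(F,C)$ and then append a handful of coordinates together with a far-away string $s$ to force the center to lie in $F$. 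The natural correspondence is: facilities $F$ indexed by $S_A\in\mathcal{A}$, clients $C$ indexed by pairs $(S_B,S_C)\in\mathcal{B}\times\mathcal{C}$. For a candidate center $f_{S_A}$ to be a valid 1-center we will need: for every client $(S_B,S_C)$, the edit distance $\ed(f_{S_A}, c_{(S_B,S_C)})$ is below a threshold $\theta$ \emph{iff} there exists $S_D\in\mathcal{D}$ with $S_A\cap S_B\cap S_C\cap S_D=\emptyset$. This inner ``$\exists S_D$'' is the crux, and it is exactly where the extra quantifier alternation over the \HSC-style reductions enters.

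\textbf{Encoding the inner $\exists S_D$ via edit distance.} The key gadget I would reach for is the existing machinery showing edit distance can simulate an OR of many inner tests with only additive blow-up — concretely, the reductions behind the quadratic/quartic SETH-hardness of edit distance (Backurs--Indyk style, and its use in \cite{BC19,ABHS20}), where a ``selection'' string is built so that an optimal alignment is free to pick which of several candidate sub-alignments to use. I would encode $S_A$ and $S_B$ (and $S_C$) into the bulk of the facility and client strings using the Hamming-to-edit embedding of Lemma~\ref{lem:hamtoedit} (or a variant where matched blocks cost $0$ and mismatched blocks cost $1$), so that intersection-size of $S_A\cap S_B\cap S_C$ is read off as a Hamming-type count. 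Then I would attach, to the client string, a concatenation over all $S_D\in\mathcal{D}$ of $|\mathcal{D}|$ gadget blocks, laid out with separator/padding strings so that an optimal edit alignment realigns the facility against exactly one block $S_D$ and pays a cost that decreases precisely when $S_A\cap S_B\cap S_C\cap S_D=\emptyset$. Summing the fixed alignment cost of the bulk with the best-over-$S_D$ block cost yields $\ed(f_{S_A},c_{(S_B,S_C)})\le\theta$ iff $\exists S_D$ making the quadruple intersection empty. The $\forall S_B$ and $\exists S_C$ quantifiers are handled purely by the 1-center structure: the center must beat \emph{every} client (so $\forall S_B$, $\forall S_C$), but we want $\exists S_C$, so I would add, for each $S_B$, a single ``default'' client (e.g. corresponding to $S_C=U$, or a dummy) whose distance is always large unless \emph{some} real $S_C$-client for that $S_B$ is already satisfiable — i.e. I would merge the $\exists S_C$ into the client-distance definition rather than the 1-center quantifier, exactly as the $\exists S_D$ is merged. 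So really clients are indexed by $S_B$ alone, and $\ed(f_{S_A}, c_{S_B}) \le \theta \iff \exists S_C\,\exists S_D : S_A\cap S_B\cap S_C\cap S_D=\emptyset$; the two innermost existentials both become an OR inside one edit-distance gadget of length $O(|\mathcal{C}|\cdot|\mathcal{D}|\cdot\mathrm{polylog})$, which is $\mathrm{poly}(n)$ but I must keep it $\le n$ by choosing $c$ and the block sizes appropriately (padding $n$ up as needed, as in Theorem~\ref{thm:Edit}'s statement $d=n$).

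\textbf{Forcing the center into $F$, and the count.} After $(F,C)$ is built with $|F|=|C|=n$, I add $O(1)$ fresh coordinates (blocks) and one string $s$ that is at edit distance $>\theta$ from every client $c_{S_B}$ but at distance $\le\theta$ from every facility $f_{S_A}$ — achievable by giving clients a distinguishing block $s$ lacks and facilities a short block matching $s$'s — so that the optimal 1-center of $P:=F\cup C\cup\{s\}$, if it has cost $\le\theta$, must be a facility, and a facility $f_{S_A}$ has cost $\le\theta$ iff it is within $\theta$ of every client, i.e. iff $\forall S_B\,\exists S_C\,\exists S_D: S_A\cap S_B\cap S_C\cap S_D=\emptyset$. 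Thus the 1-center instance has optimum $\le\theta$ iff the $\exists\forall\exists\exists\mathsf{OVH}$ instance is a YES-instance. The blow-up is $|P|=2n+1$ points of length $O(\mathrm{poly}(n))$, and I will tune parameters so the length is exactly $n$ (absorbing polylog and constant factors by raising $n$ to a suitable polynomial and padding); an $n^{4-\eps}$ algorithm then decides the \OVH\ instance in time $\mathrm{poly}(n)^{4-\eps}$, contradicting Quantified SETH after re-parameterizing $\eps$.

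\textbf{Main obstacle.} The delicate part is the edit-distance gadget that must simultaneously (i) make the bulk alignment \emph{rigid} so it faithfully reports $|S_A\cap S_B\cap S_C|$-type quantities with no ``cheating'' re-alignments, and (ii) make the $\mathcal{C}\times\mathcal{D}$ block region \emph{flexible} so the optimal alignment is free to select the best $(S_C,S_D)$ pair, with a clean additive separation (a gap of at least $1$, amplifiable by repetition) between the ``intersection empty'' and ``intersection nonempty'' cases — all while controlling total length. This is exactly the kind of careful gadget engineering done in \cite{BC19,ABHS20} for the $\forall\exists$ setting; the new wrinkle here is nesting it one level deeper ($\exists\forall\exists\exists$), which is why we need to pack \emph{two} inner existentials into one edit-distance computation. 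I expect the bulk of the real proof to be verifying the alignment-cost lemma for this composite gadget; everything else (the $\forall$ from 1-center, the facility-forcing string, the length bookkeeping) is routine given the earlier lemmas.
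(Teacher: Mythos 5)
Your high-level plan (two-step reduction via facilities, using the \EFEEOV structure, padding to force the center into $F$) matches the paper, and your padding scheme in the last step is essentially the paper's $1^m\circ f\circ 0^{2m}$ / $1^m\circ c\circ 1^{2m}$ / $0^{4m}$ construction. But there is a genuine gap in how you allocate the two inner existential quantifiers, and it breaks the length bookkeeping that the theorem statement requires.

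You propose packing \emph{both} $\exists S_C$ and $\exists S_D$ into the client string: ``clients are indexed by $S_B$ alone, and $\ed(f_{S_A},c_{S_B})\le\theta\iff\exists S_C\,\exists S_D$,'' with a gadget of length $O(|\mathcal{C}|\cdot|\mathcal{D}|\cdot\mathrm{polylog})=\tilde O(N^2)$ where $N$ is the size of each \EFEEOV\ collection. This gives $\tilde O(N)$ points of length $\tilde O(N^2)$. To apply an $n^{4-\eps}$-time algorithm that expects $n$ points in $\{0,1\}^n$, you must pad the point count up to the string length, yielding $n\approx N^2$; the algorithm then runs in $\tilde O(N^{8-2\eps})$ time, which does \emph{not} contradict the $N^{4-\eps}$ lower bound of Quantified SETH. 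Your closing remark about ``tuning parameters so the length is exactly $n$'' does not resolve this: the $N^2$-length OR gadget is intrinsic to your allocation, not an artifact of sloppy constants.

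The paper's key move, which you miss, is to \emph{split} the two inner existentials across the two sides of the edit-distance computation, exploiting the fact that the Backurs--Indyk/Bringmann--K\"unnemann OV-to-ED reduction already has an $\exists\exists$ structure built in (it produces strings $s_A,s_B$ from \emph{sets} of vectors $A,B$ so that $\ed(s_A,s_B)<\tau$ iff there is an orthogonal pair $v_A\in A$, $v_B\in B$). Concretely, the paper sets $A:=\{v(S_A\cap S_C): S_C\in\mathcal{C}\}$ for each facility $S_A\in\mathcal{A}$ and $B:=\{v(S_B\cap S_D): S_D\in\mathcal{D}\}$ for each client $S_B\in\mathcal{B}$; then orthogonality of some $(v_A,v_B)\in A\times B$ is equivalent to $\exists S_C\,\exists S_D$ with $S_A\cap S_B\cap S_C\cap S_D=\emptyset$. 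This way each of the $2N+1$ strings has length $O(N\log N)$, so $d=\tilde O(n)$ as the theorem requires, and the $n^{4-\eps}$ algorithm yields an $\tilde O(N^{4-\eps})$-time decision for \EFEEOV. Using the known lemma as a black box also dispenses with the alignment-rigidity analysis you correctly flag as the hard part of your ad-hoc gadget — the paper does not need to reprove it.
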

\begin{proof}
Let us first reduce to the 1-center problem with facilities where there are two sets of binary strings, a set of clients $C$ and a set of facilities $F$ and the goal is to decide if there is a string in $F$ that has ED at most $\tau$ to all strings in $C$.
Given an instance $\mathcal{A},\mathcal{B},\mathcal{C},\mathcal{D}$ of \EFEEOV we construct $C$ and $F$ as follows.

First we will use the following lemma that follows from the existing reductions from OVH to ED \cite{BI15,BK15} (the latter reference gets the alphabet size down to $2$).

\begin{lemma}[\cite{BK15}]
There are two linear time algorithms such that: each algorithm takes a set ($A$ or $B$) of $n$ binary vectors of length $d$ and constructs (independently of the other) a binary string ($s_A$ or $s_B$)  of length $O(nd)$ with the following property for a fixed value $\tau$ that only depends on $n,d$:
$ED(s_A,s_B) < \tau$ if there is a pair of orthogonal vectors $v_A \in A, v_B \in B$ and $ED(s_A,s_B) \geq \tau$ otherwise.
\end{lemma}

For a set $X \subseteq [d] $ let $v(X) \in \{0,1\}^d$ be the natural encoding of the set as a binary vector where $v(X)[i]=1$ iff $i \in X$.
Note that two vectors are orthogonal iff the two corresponding sets are disjoint.
 
Now, for each set $S_A \in \mathcal{A}$ define the set of $n$ vectors $A = \{ v(X) \mid S_C \in \mathcal{C}, X=S_A \cap S_C \}$ representing the $n$ sets that result from intersecting $S_A$ with any set in $\mathcal{C}$.
Similarly, for each set $S_B \in \mathcal{B}$ define the set of $n$ vectors $B = \{ v(X) \mid S_D \in \mathcal{D}, X=S_B \cap S_D \}$.

It follows that there is an orthogonal pair $v_A \in A, v_B \in B$ iff there exist $S_C \in \mathcal{C}$ and $S_D \in \mathcal{D}$ such that $S_A \cap S_B \cap S_C \cap S_D = \emptyset$.
Therefore, if we use the algorithms in the above lemma to encode each set $A$ with a string $s_A$ and add it to the set of facilities $F$, and also encode each set $B$ with a string $s_B$ and add it into the set of clients $C$ we get the reduction we are after: By the definition of the \EFEEOV problem, there is a string $s_A \in F$ such that for all strings $s_B \in C$ we have $ED(s_A,s_B)< \tau$ if and only if the given \EFEEOV instance is a yes-instance.

Finally, we reduce to the basic 1-center problem (without facilities). Suppose that the strings in $F,C$ have length $m$.
We simply construct an instance $P\subseteq \{0,1\}^{4m}$ of 1-center as follows. 

\[
P:=\{1^m \circ f \circ 0^{2m}\mid f\in F\}\cup \{1^m \circ c\circ 1^{2m}\mid c\in C\}\cup \{0^{4m}\}.
\]

The following simple facts about the ED of the transformed strings show that the optimal center in $P$ must be from $\{1^m \circ f \circ 0^{2m}\mid f\in F\}$ and its cost would be smaller than $2m+\tau$ iff the original \EFEEOV instance is a yes-instance.

\begin{claim}
\label{cl:ED}
Let $x,y$ be two binary strings of length $m$ with ED exactly $t$.
\begin{itemize}
\item $ED(1^m \circ x \circ 0^{2m}, 1^m \circ y \circ 0^{2m}) \leq m$.
\item $ED(1^m \circ x \circ 0^{2m}, 0^{4m}) \leq 2m$.
\item $ED(1^m \circ x \circ 0^{2m}, 1^m \circ y \circ 1^{2m}) = 2m+t$.
\item $ED(1^m \circ x \circ 1^{2m}, 0^{4m}) \geq 3m$.
\end{itemize}
\end{claim}

The first and second items follow from the straightforward alignment of the strings.
The fourth item follows because $ED(0^{\ell},1^{\ell})=\ell$.
The third item requires a bit more care.
First, to see that the ED is at most $2m+t$ consider the alignment that maps $x$ to $y$ optimally at cost $t$ and then maps the other parts in the straightforward way at cost $2m$.
Now suppose for contradiction that there was a better alignment. This alignment must match one of the new letters (from the transformation) to $x$ or $y$; otherwise it would yield an alignment between $x,y$ at cost smaller than $t$. But any alignment that matches the $1$ letters on the left to $x$ or $y$ can be corrected so that the $1^m$ parts on the left are matched to each other, without affecting the cost.
Similarly, any matching between the letters on the right to $x$ or $y$ can be corrected without increasing the cost. Suppose that a $0$ from the right is matched to $y$. This implies that one of the $1$'s to the right of $y$ must be deleted (because there are no longer enough $0$'s in the other string to get substituted with all of them), and a corrected alignment that instead substitutes the $0$ with a $1$ (reducing the number of such deletions by one) and leaves the mate in $y$ unmatched does not have a higher cost. We refer the reader to \cite{BK15} for more formal proofs of such claims.
\end{proof}

\begin{remark}\label{rem:1med}
The above reduction to Edit also work for the 1-median problem but with two key differences.
The first and main difference is that, since we take the sum instead of the max, the cost in the objective may now be affected by non-orthogonal pairs and it is no longer sufficient to have gadgets that give distance $< \tau$ or $\geq \tau$ depending on the orthogonality. Instead, we need gadgets that guarantee that the distance is either $< \tau$ or exactly $\tau$. 
Fortunately, such requirements can be accomplished, see e.g. Theorem~$4$ in \cite{BI15}.
The second difference is that we do not need the $\forall$ quantifier in the starting assumption; the sum is powerful enough to support the (standard) $\exists \exists \exists \exists$ structure type.
Therefore, the lower bounds for 1-median can be based on the standard SETH rather than the Quantified-SETH.
\end{remark}

\subsection{Subquadratic Lower Bounds for 1-center in Low dimensional Euclidean space}\label{sec:euclid}

In this subsection, we show that  an algorithm with subquadratic running time does not exist in  the low dimensional Euclidean metric for the 1-center problem. Our proof essentially adopts ideas developed in \cite{W18,C20}. We note that this result is surprising as there is a near linear time algorithm for 1-center in the low dimensional $\ell_1$-metric.

\begin{remark}\label{rem:ell1}
 For the $\ell_1$-metric, we can solve Discrete 1-center problem in ${O}(n2^{2d})$ time by using the isometric embedding of the  $\ell_1$-metric to the  $\ell_{\infty}$-metric \cite{iso}, and then noting Remark~\ref{rem:inf}.
 \end{remark}

\begin{theorem}\label{thm:euclidean}
Assuming \HSC, there exists a constant $\eta>1$ such that for every $\varepsilon>0$, no algorithm running in time $n^{2-\varepsilon}$ can given as input a point-set $P\subseteq \mathbb{R}^{d}$  and a positive real $\alpha$ output all points in $P$ whose 1-center cost in the Euclidean metric is at most $\alpha$, where $|P|=n$, $d=\eta^{\logstar n}$, and representing each vector requires at most $\tilde{O}(\log n)$ bits. 
\end{theorem} 
\begin{proof}[Proof of Theorem~\ref{thm:euclidean}]
We prove the theorem statement by contradiction. Suppose for some $\eps>0$  there is an algorithm $\mathcal{T}$ running in time $n^{2-\eps}$ that can given as input a point-set $P\subseteq \mathbb{R}^d$ and a positive real $\alpha$ output all points in $P$ whose 1-center cost in the Euclidean metric is at most $\alpha$, where $|P|=n$, $d=\eta^{\logstar n}$, and each vector is of at most $k\log n$ bit entries (for some constant integer $k$).

Let $(\mathcal{A}:=(S_1,\ldots ,S_n),\mathcal{B}:=(T_1,\ldots ,T_n),U)$ be an instance arising from \HSC, where $|U|=c\log n$. We think of each set in $\mathcal{A}$ and $\mathcal{B}$ as its characteristic vector in $\{0,1\}^{c\log n}$.  We show how we can decide this instance in $n^{2-\frac{\varepsilon}{2}}$ time using $\mathcal{T}$, thus contradicting \HSC.
 
We need the following theorem from Chen \cite{C20}.
\begin{theorem}[Chen \cite{C20}]\label{thm:Chen}
Let $b, \ell$ be two sufficiently large integers. There is a reduction $\psi_{b,\ell} : \{0,1\}^{b \cdot \ell} \to \mathbb{Z}^{\ell}$ and a set $V_{b,\ell} \subseteq \mathbb{Z}$, such that for every $x,y \in \{0,1\}^{b \cdot \ell}$,
	
	\[
	x \cdot y = 0 \Leftrightarrow \psi_{b,\ell}(x) \cdot \psi_{b,\ell}(y) \in V_{b,\ell}
	\]
	and 
	\[
	0 \le \psi_{b,\ell}(x)_i < {\ell}^{6^{\logstar(b)} \cdot b}
	\]
	for all possible $x$ and $i \in [\ell]$.
	Moreover, the computation of $\psi_{b,\ell}(x)$ takes $\operatorname*{poly}(b \cdot \ell)$ time, and the set $V_{b,\ell}$ can be constructed in $O\left(\ell^{O(6^{\log^*(b)} \cdot b)} \cdot \operatorname{poly}(b \cdot \ell) \right)$ time.
\end{theorem}

We use the above theorem with  $\ell=7^{\logstar n}$ and $b=|U|/ \ell$. 
Note that if $\ell=7^{\logstar n}$ then $\log \left({\ell}^{6^{\logstar(b)} \cdot b}\right)=o(\log n)$. 
All of the below construction details appears in \cite{W18,C20} and we skip many of the calculations and claim proofs hereafter. Our contributions are mainly in using these previously known constructions in a new way to prove the theorem statement. 
In particular, for every $t\in V_{b,\ell}$ we create an instance $(P_t\subseteq \mathbb{R}^{(\ell+1)^2+3},\alpha:=\sqrt{2n^5-1})$ of 1-center as follows.

For every\footnote{Recall that we think of $S_i$ and $T_j$ through their characteristic vector.}  $S_i\in \mathcal{A}$ (resp.\ $T_j\in\mathcal{B}$) we first define a point $p_i^t\in \mathbb{Z}^{\ell+1}$ (resp.\ $q_j^t\in \mathbb{Z}^{\ell+1}$) as follows:
\[p_i^t:=(\psi_{b,\ell}(S_i),t)\ (\text{resp. }q_j^t:=(\psi_{b,\ell}(T_j),-1)).
\]

It is then easy to verify that $S_i\cap T_j=\emptyset$ if and only if there exists some $t\in V_{b,\ell}$ such that $\langle p_i^t,q_j^t\rangle=0$. Next for every $p_i^t\in \mathbb{Z}^{\ell+1}$ (resp.\ $q_j^t\in \mathbb{Z}^{\ell+1}$) we define $\tilde{p}_i^t\in \mathbb{Z}^{(\ell+1)^2}$ (resp.\ $\tilde{q}_j^t\in \mathbb{Z}^{(\ell+1)^2}$) as follows:
 \[
\forall a,b\in[\ell+1],\ \tilde{p}_i^t(a,b):= p_i^t(a)\cdot p_i^t(b) \ (\text{resp. }\tilde{q}_j^t(a,b):= - q_j^t(a)\cdot q_j^t(b)).
\]

It is  then straightforward to verify that  $\langle p_i^t,q_j^t\rangle=0$ if and only if $\langle \tilde{p}_i^t,\tilde{q}_j^t\rangle\ge 0$.

Finally, we have our pointset $P_t\in\mathbb{R}^{(\ell+1)^2+3}$ defined as follows:
\[
P_t:=\underbrace{\left\{\left(\tilde{p}_i^t,\sqrt{n^5-\|\tilde{p}_i^t\|_2^2},0,0\right)\bigg| S_i\in \mathcal{A}\right\}}_{P_t^{\mathcal{A}}}\bigcup\underbrace{\left\{\left(-\tilde{q}_j^t,0,\sqrt{n^5-\|\tilde{q}_j^t\|_2^2},\sqrt{n^5}\right)\bigg| T_j\in \mathcal{B}\right\}}_{P_t^{\mathcal{B}}}\cup\left\{\vec{0}\right\},	
\]
where $\vec{0}=(0,0,\ldots ,0)$.

It can then be verified that   $\langle \tilde{p}_i^t,\tilde{q}_j^t\rangle\ge 0$ if and only if the distance between $\left(\tilde{p}_i^t,\sqrt{n^5-\|\tilde{p}_i^t\|_2^2},0\right)$ and $\left(-\tilde{q}_j^t,0,\sqrt{n^5-\|\tilde{q}_j^t\|_2^2}\right)$ is at least $\sqrt{2n^5}$; otherwise their distance is at most $\sqrt{2n^5-1}$. Also note that any pair of points in ${P_t^{\mathcal{A}}}$ or any pair of points in ${P_t^{\mathcal{B}}}$ are at distance at most $\sqrt{2n^5-1}$ from each other. Finallt, note that the distance between any point in ${P_t^{\mathcal{B}}}$ and $\vec{0}$ is exactly $\sqrt{2n^5}$ and the distance between any point in ${P_t^{\mathcal{A}}}$ and $\vec{0}$ is exactly $\sqrt{n^5}$.

We run $\mathcal{T}$ on $(P_t,\alpha:=\sqrt{2n^5-1})$ for every $t\in V_{b,\ell}$. 
Let $\mathcal{O}_t\subseteq P_t$ be the output of running $\mathcal{T}$ on $(P_t,\alpha)$. In other words for every $t\in V_{b,\ell}$ and every $p\in\mathcal{O}_t$ we have that for every $p'\in P_t$, $\|p-p'\|_2\le \sqrt{2n^5-1}$.

We claim that there exists $S$ in $\mathcal{A}$ such that it intersects with every subset $T$ in $\mathcal{B}$ if and only if there exists $i\in[n]$ such that for all $t\in V_{b,\ell}$, we have $\left(\tilde{p}_i^t,\sqrt{n^5-\|\tilde{p}_i^t\|_2^2},0\right)\in \mathcal{O}_t$. 

Suppose there exists $S_{i^*}$ in $\mathcal{A}$ such that it intersects with every subset $T$ in $\mathcal{B}$. Fix $t\in V_{b,\ell}$. We have that $\left(\tilde{p}_{i^*}^t,\sqrt{n^5-\|\tilde{p}_{i^*}^t\|_2^2},0\right)$ is at distance at most $\sqrt{2n^5-1}$ from every other point in $P_t^{\mathcal{A}}$ just from construction. Suppose there is $\left(-\tilde{q}_j^t,0,\sqrt{n^5-\|\tilde{q}_j^t\|_2^2}\right)\in P_t^{\mathcal{B}}$ such that their distance is greater than $\sqrt{2n^5-1}$ then from the construction, their distance must be $\sqrt{2n^5}$, which implies that $S_{i^*}\cap T_j=\emptyset$, a contradiction.

On the other hand, if for every $S$ in $\mathcal{A}$ there exists $T$ in $\mathcal{B}$ such that $S$ and $T$ are disjoint then we show that for any point $\left(\tilde{p}_{i}^t,\sqrt{n^5-\|\tilde{p}_{i}^t\|_2^2},0\right)$ there exists $t\in V_{b,\ell}$ such that $\left(\tilde{p}_{i}^t,\sqrt{n^5-\|\tilde{p}_{i}^t\|_2^2},0\right)\notin \mathcal{O}_t$. Fix $i\in [n]$. Let $T_j\in \mathcal{B}$ such that $S_i\cap T_j=\emptyset$. Let $t^*:=\psi_{b,\ell}(S_i)\cdot \psi_{b,\ell}(T_j)$.  From Theorem~\ref{thm:Chen} we have that $t^*\in V_{b,\ell}$. Thus $\left(\tilde{p}_{i}^{t^*},\sqrt{n^5-\|\tilde{p}_{i}^{t^*}\|_2^2},0\right)$ and $\left(-\tilde{q}_j^{t^*},0,\sqrt{n^5-\|\tilde{q}_j^{t^*}\|_2^2}\right)$ in $P_{t^*}$ are at distance at least $\sqrt{2n^5}$ and thus $\left(\tilde{p}_{i}^{t^*},\sqrt{n^5-\|\tilde{p}_{i}^{t^*}\|_2^2},0\right)\notin \mathcal{O}_{t^*}$.

Finally, note that the total run time was $O(n^{2-\varepsilon}\cdot |V_{b,\ell}|)=O(n^{2-\varepsilon}\log n)<n^{2-\frac{\varepsilon}{2}}$.
\end{proof}
\section{An $n^{2.5}$ time $1+\epsilon$ Approximation Algorithm for 1-Center in Ulam Metric when $d=n$}
\label{sec:ulam-up}
In this section, we consider the 1-center problem under Ulam metric. More precisely, we consider a problem where $n$ strings $s_1, s_2, \ldots, s_n$ are given as input and our goal is to find a string $s_k$ such that the maximum distance of $s_k$ from the rest of the strings is minimized. Our focus here is on the Ulam metric.

We assume throughout this section that the length of all strings is equal to $d$. Our algorithm for this case is two-fold. Let $o$ be the value of the solution (i.e., the maximum distance of the center of the strings to the rest of the strings is exactly equal to $o$). If $o$ is lower bounded by $\sqrt{d}$, previous work on Ulam distance gives us a $1+\epsilon$ approximate solution for center in the following way: We iterate over all pairs of strings and each time we estimate their Ulam distance via the algorithm of Naumovitz, Saks, and Seshadhri~\cite{naumovitz2017accurate} for approximating the Ulam distance of each pair. When the Ulam distance of two strings is equal to $u$, their algorithm takes time $\tilde O_{\epsilon}(d/u + \sqrt{d})$ to $1+\epsilon$ approximate the solution. Thus, we only run their algorithm up to a runtime of $\tilde O_{\epsilon}(\sqrt{d})$ to either obtain a $1+\epsilon$ approximate solution for the Ulam distance or verify that the Ulam distance is smaller than $\sqrt{d}$. It follows that if $o \geq \sqrt{d}$ this information is enough for us to approximate the 1-center problem within a factor $1+\epsilon$ and the runtime of the algorithm is bounded by $\tilde O_{\epsilon}(n^2\sqrt{d})$. Thus, it only remains to design an algorithm for the low-distance regime.

From here on, we assume that $o \leq \sqrt{d}$. In this case, we take an arbitrary string (say $s_1$) and compute the Ulam distance of that string to all the other strings. In addition to this, we also keep track of the changes that convert $s_1$ into all the strings. It follows that since $o \leq \sqrt{d}$, the distance of $s_1$ to all the strings is bounded by at most $2\sqrt{d}$. Thus, via the transformations we compute in this step, we would be able to make a transformation from any $s_i$ to any $s_j$ with at most $4\sqrt{d}$ operations (we can combine the transformation from $s_1$ to $s_i$ and the transformation from $s_1$ to $s_j$). Using this information, we can determine the exact Ulam distance of every string $s_i$ to every string $s_j$ in the following way:

We start with the non-optimal transformation from $s_i$ to $s_j$ that uses at most $4\sqrt{d}$ operations. We then split the characters from $[1, \ldots, d]$ into buckets such that in each bucket all the characters are next to each other and they appear in the same order in the two strings. To be more precise, consider the following procedure: color each character of $s_i$ and $s_j$ which is touched in the transformation (deleted, added, or changed) in red and the rest blue. Each set of consecutive blue characters and each single red character makes a bucket.
 It follows that because there is a transformation from $s_i$ to $s_j$ with at most $4\sqrt{d}$ operations, the total number of buckets would be bounded by $O(\sqrt{d})$. Moreover, there exists an optimal transformation wherein either all characters of each buckets are deleted/inserted or all characters of each bucket remain intact. This implies that we can compress the two strings into smaller strings by replacing each bucket with a single character. The insertion and deletion of these special characters then has a cost proportional to the size of the bucket. This way, the size of the two strings would be bounded by $O(\sqrt{d})$ and thus we can compute the Ulam distance of the two strings in time $\tilde O(\sqrt{d})$. Therefore, we can compute the center of the strings in time $\tilde O_{\epsilon}(nd + n^2\sqrt{d})$.

	\begin{algorithm}[tbh]
	\KwData{$s_1, s_2, \ldots, s_n$}
	\KwResult{1-center}
	$o \leftarrow \infty$\;
	\For{$i \leftarrow 1$ to $n$}{
		$mx \leftarrow -1$\;
		\For {$j \leftarrow 1$ to $n$}{
			Run ~\cite{naumovitz2017accurate} on $s_i$ and $s_j$ up to $\tilde O_{\epsilon}(\sqrt{d})$ steps\;
			\If {the algorithm terminates}{
				$mx \leftarrow \max\{mx, \text{the output of the algorithm}\}$\;
			}
		}
		$o \leftarrow \min\{o, mx\}$\;
	}
	\If{$o \neq -1$}{
		\Return o\;
	}\Else{
		$o \leftarrow \infty$\;
		\For{$i \leftarrow 1$ to $n$}{
			$tr_i \leftarrow $ optimal transformation between $s_1$ and $s_i$\;
		}
		\For{$i \leftarrow 1$ to $n$}{
			$mx \leftarrow -1$\;
			\For{$j \leftarrow 1$ to $n$}{
				$(s^*_i, s^*_j) \leftarrow $ compressed versions of $(s_i, s_j)$ based on $tr_i$ and $tr_j$\;
				$mx \leftarrow \max\{mx, \text{the Ulam distance of } s^*_i \text{ and } s^*_j\}$\;
			}
			$o \leftarrow \min\{o, mx\}$\;
		}
		\Return $o$\;
	}
	\caption{1-center of $n$ strings under Ulam metric.}
\end{algorithm}

\upperl*

\begin{proof}[Proof of Theorem~\ref{thm:ulam:upper}]
 The outline of the algorithm along with its runtime analysis is given earlier. Here we prove that the approximation factor of the algorithm is bounded by $1+\epsilon$. In case $o \geq \sqrt{d}$, we use the $1+\epsilon$ approximation algorithm of Naumovitz, Saks, and Seshadhri~\cite{naumovitz2017accurate} for each pair of strings up to a runtime of $\tilde O_{\epsilon}(\sqrt{d})$. If the algorithm gives an estimation before we terminate it, we take the value into account when determining the maximum distance for the strings involved. It follows that since $o \geq \sqrt{d}$, then for each string $s_x$, there is one string $s_y$ whose distance to $s_x$ is at least $\sqrt{d}$ and thus the maximum distance we determine for each string is a $1+\epsilon$ approximation of the optimal value. 
 
 Next, we show that in case $o \leq \sqrt{d}$, our algorithm determines the Ulam distance of each pair exactly and thus solves the 1-center problem correctly. In order to determine the Ulam distance between $s_i$ and $s_j$, we begin with a transformation of size at most $4\sqrt{d}$ between the two strings. We then mark all the characters that are either deleted or inserted in this transformation and all the characters that are next to these characters. We then compress the two strings in the following way: each marked character becomes a single character with the same value. Each maximal interval of unmarked characters that are next to each other also become a single character whose value represents the entire interval. Therefore, the compressed strings have $O(\sqrt{d})$ characters each. It follows that the Ulam distance of the compressed strings is exactly equal to the Ulam distance of the original strings. Moreover, even though for the compressed strings the operations have arbitrary costs, we can still solve Ulam distance in time proportional to the length of the strings which results in an algorithm with runtime $\tilde O(\sqrt{d})$ for computing Ulam distance between each pair.
 \end{proof}
\section{Hardness of Approximation of 1-center in String and $\ell_p$-metrics}\label{sec:inapprox}

In this section, we prove hardness of approximation results for the 1-center problem but first we need some error correcting codes background. 

\paragraph*{Error-Correcting Codes.}
An error correcting code $C$ over alphabet $\Sigma$ is a function $C: \Sigma^m \to \Sigma^{\ell}$ where $m$ and $\ell$ are positive integers which are referred to as the {\em message length} and {\em block length} of $C$ respectively. Intuitively, the function $C$ encodes an original message of length $m$ to an encoded message of length $\ell$.
The {\em rate} of a code $\rho(C)$ is defined as the ratio between its message length and its block length, i.e., $\rho(C) = \nicefrac{m}{\ell}$. 
The {\em relative distance} of a code, denoted by $\delta(C)$, is defined as $\underset{x \ne y \in \Sigma^m}{\min} \delta(C(x), C(y))$ where $\delta(C(x), C(y))$ is the {\em relative Hamming distance} between $C(x)$ and $C(y)$, i.e., the fraction of coordinates on which $C(x)$ and $C(y)$ disagree.

In this paper,
we require our codes to have some special algebraic properties which have been shown to be present in algebraic geometric codes \cite{GS96}. First, we will introduce a couple of additional definitions. 

\begin{definition}[Systematicity]
Given $s \in \mathbb N$, a code $C:\Sigma^m\to \Sigma^{\ell}$ is {\em $s$-systematic} if there exists a size-$s$ subset of $[\ell]$, which for convenience we identify with $[s]$, such that for every $x \in \Sigma^{s}$ there exists $w \in \Sigma^m$ in which $x = C(w)\mid_{[s]}$. 
\end{definition}

\begin{definition}[Degree-$t$ Closure] \label{def:poly}
Let $\Sigma$ be a finite field. 
Given two codes $C:\Sigma^m\to \Sigma^{\ell},C':\Sigma^{m'}\to \Sigma^{\ell}$ and positive integer $t$, we say that $C'$ is a degree-$t$ closure of $C$ if, for every $w_1,\ldots ,w_r\in \Sigma^m$ and $P\in \mathbb F[X_1,\ldots ,X_r]$ of total degree at most $t$, it holds that $\omega:=P(C(w_1),\ldots ,C(w_r))$ is in the range of $C'$, where $\omega\in \Sigma^\ell$ is defined coordinate-wise by the equation $\omega_i:=P(C(w_1)_i,\ldots ,C(w_r)_i)$.
\end{definition}

Below we provide a self-contained statement of the result we need; it follows from Theorem~7 of \cite{SAKSD01}, which
gives an efficient construction of the algebraic geometric codes based on \cite{GS96}'s explicit towers of function fields.

\begin{theorem}[\cite{GS96,SAKSD01}] \label{thm:ag-code}
There is a constant $\lambda>0$ such that for any prime $q\ge 7$, there are two code families $\mathcal{C}^1 = \{C_n^1\}_{n \in \mathbb{N}},\ \mathcal{C}^2 = \{C_n^2\}_{n \in \mathbb{N}}$ such that the following holds for all $n\in\mathbb N$,
\begin{itemize}
\item  $C_n^1$ and $C_n^2$ are $n$-systematic code with alphabet $\mathbb{F}_{q^2}$,
\item  $C_n^1$ and $C_n^2$ have block length less than $\lambda n$.
\item  $C_n^2$ has relative distance $ \geq \nicefrac{1}{2}$,
\item  $C_n^2$ is a degree-$2$ closure of $C_n^1$, and,
\item  Any codeword in $C_n^1$ or $C_n^2$ can be computed in poly($n$) time.
\end{itemize}
\end{theorem}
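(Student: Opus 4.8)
The plan is to derive this from the effective algebraic--geometric code construction of \cite{SAKSD01} by taking $\mathcal{C}^1$ and $\mathcal{C}^2$ to be a nested pair of one-point Goppa codes built from a single function field in the Garcia--Stichtenoth tower over $\mathbb{F}_{q^2}$, and reading off all five bullet points from the Riemann--Roch theorem. Fix a function field $F/\mathbb{F}_{q^2}$ in the tower of \cite{GS96} of genus $g$ with $N$ rational places, where $g=\Theta_q(n)$ will be chosen below; the tower attains $N/g\to q-1$, so $N\ge (q-1-o(1))\,g$. Choose one rational place $Q$, an integer $D_G$ to be fixed, the effective divisor $G:=D_G\cdot Q$, and let $D=P_1+\cdots+P_\ell$ be a sum of $\ell$ of the remaining rational places (so $\ell\le N-1$). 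Set
\[
\mathcal{C}^1:=C_{\mathcal L}(D,G)=\bigl\{(f(P_1),\dots,f(P_\ell)):f\in\mathcal L(G)\bigr\},\qquad
\mathcal{C}^2:=C_{\mathcal L}(D,2G).
\]
Both are $\mathbb{F}_{q^2}$-linear codes of block length $\ell$, and Theorem~7 of \cite{SAKSD01} computes bases of $\mathcal L(G)$ and $\mathcal L(2G)$ and evaluates their elements at the $P_i$ in $\mathrm{poly}(n)$ time, which yields $\mathrm{poly}(n)$-time encoders (and, after Gaussian elimination to extract an information set, systematic encoders), giving the last bullet.

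For systematicity: since $G\ge 0$ we have $\dim\mathcal L(G)\ge D_G-g+1$ and $\dim\mathcal L(2G)\ge 2D_G-g+1$, so taking $D_G\ge n+g$ makes both dimensions at least $n$. Any linear code of dimension $k\ge n$ has an information set of size $k$, and the projection onto any $n$ of those coordinates is onto $\mathbb{F}_{q^2}^{\,n}$; moreover $\mathcal L(G)\subseteq\mathcal L(2G)$ gives $\mathcal{C}^1\subseteq\mathcal{C}^2$, so a systematic set for $\mathcal{C}^1$ is automatically one for $\mathcal{C}^2$. For the relative distance and block length: the Goppa bound gives $d(\mathcal{C}^2)\ge \ell-\deg(2G)=\ell-2D_G$, hence relative distance $\ge 1/2$ whenever $D_G\le \ell/4$. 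Combining the two constraints forces $\ell\ge 4(n+g)$, while we also need $\ell\le N-1\le(q-1-o(1))g$; since $q\ge 7$ we have $q-1\ge 6>4$, so taking $g$ to be the smallest tower genus exceeding $4n/(q-5)$ (which is $\Theta_q(n)$, as consecutive genera in the tower grow by a bounded factor), then $D_G:=n+g$ and $\ell:=4(n+g)$, satisfies everything and gives $\ell=\Theta_q(n)$; tracking the constant over all $q\ge 7$ (worst case $q=7$) produces an absolute $\lambda$ with $\ell<\lambda n$.

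Finally, the degree-$2$ closure is where the multiplicative structure of Goppa codes enters directly. For $f_1,f_2\in\mathcal L(G)$ one has $\mathrm{div}(f_1f_2)+2G=(\mathrm{div}\,f_1+G)+(\mathrm{div}\,f_2+G)\ge 0$, so $f_1f_2\in\mathcal L(2G)$; also $1\in\mathcal L(2G)$ and $\mathcal L(G)\subseteq\mathcal L(2G)$. Thus for any messages $w_1,\dots,w_r$, with $f_i\in\mathcal L(G)$ the Riemann--Roch functions encoding them, and any $P\in\mathbb{F}_{q^2}[X_1,\dots,X_r]$ of total degree $\le 2$, the substituted function $P(f_1,\dots,f_r)$ is an $\mathbb{F}_{q^2}$-linear combination of $1$, the $f_i$, and the products $f_if_j$, hence lies in $\mathcal L(2G)$; therefore its evaluation vector, which is exactly the coordinate-wise expression $P(\mathcal{C}^1(w_1),\dots,\mathcal{C}^1(w_r))$, lies in $\mathcal{C}^2$. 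The main obstacle is the parameter balancing in the previous paragraph: one simultaneously needs $\deg G\ge n+g$ for systematicity, $\deg(2G)\le \ell/2$ for the distance, and $\ell\le N\approx(q-1)g$ for the block-length bound, and it is precisely the hypothesis $q\ge 7$ (i.e.\ $q-1>4$) that leaves enough slack; one must also verify that the genera realized by the explicit tower of \cite{GS96} are dense enough to land in the required $\Theta_q(n)$ window and that the resulting $\lambda$ can be taken independent of $q$.
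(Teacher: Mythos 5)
The paper does not actually prove this theorem: it is stated purely as a citation to \cite{GS96,SAKSD01}, and the text immediately afterward says ``We point the interested reader to \cite{KLM19} for a proof sketch.'' So there is no in-paper proof to compare against, and your attempt should be judged on its own merits as a reconstruction.

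Your reconstruction is the canonical one-point Goppa-code argument, and it is essentially right. The four structural claims are all handled correctly: $n$-systematicity follows from $\dim\mathcal L(G)\ge\deg G-g+1\ge n$ once $\deg G\ge n+g$, plus the observation that because $\mathcal C^1\subseteq\mathcal C^2$ (which matters, since the paper's application requires the \emph{same} systematic coordinate set to work for both codes) one can reuse $\mathcal C^1$'s information set for $\mathcal C^2$; the relative-distance bound is the Goppa bound $d\ge\ell-\deg(2G)$; degree-$2$ closure is exactly the ring structure $\mathcal L(G)\cdot\mathcal L(G)\subseteq\mathcal L(2G)$ together with $1,\mathcal L(G)\subseteq\mathcal L(2G)$ and the fact that evaluation is a ring homomorphism; and poly-time encoding is what Theorem~7 of \cite{SAKSD01} provides.

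The one place your write-up is imprecise is the parenthetical ``as consecutive genera in the tower grow by a bounded factor.'' In the Garcia--Stichtenoth tower the genera grow by a factor on the order of $q$, not a $q$-independent constant, so this stated reason does not by itself give a universal $\lambda$. The conclusion still holds: plugging in the Ihara ratio $N/g\to q-1$ and the constraint $4(n+g)\le N$, the worst ratio $\ell/n$ occurs near $q=7$ and is bounded by an absolute constant, and for very small $n$ one falls back to low-genus members of the tower (genus $0$ at the base) where the constant is even better. You flag exactly this as needing verification, which is the right instinct, but the verification should replace the ``bounded factor'' claim rather than rest on it. Apart from that quantitative check, the argument is sound.
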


We  point the interested reader to \cite{KLM19} for a proof sketch of the above theorem.

\begin{theorem}[Subquadratic Inapproximability of 1-center in $\ell_p$-metrics]\label{thm:hardnessapprox}
Let $p\in \mathbb{R}_{\ge 1}\cup\{0\}$.
Assuming \HSC, for every $\eps>0$ there exists $\delta>0$  such that no algorithm running in time $n^{2-\eps}$ can given as input a point-set $P\subseteq \{0,1\}^d$ and a positive real $\alpha$ output all points in $P$ whose 1-center cost in the $\ell_p$-metric is at most $\alpha\cdot (1+\delta)$, where $|P|=n$ and $d=O_{\varepsilon}(\log n)$.
\end{theorem}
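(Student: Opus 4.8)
The plan is to lift the reduction of Section~\ref{sec:hscellp} to a gapped version using the distributed PCP / algebraic-geometry-code machinery of \cite{ARW17,R18,KLM19}, and to plug it into the multi-instance ``listing'' framework behind the proof of Theorem~\ref{thm:euclidean}. Starting from an \HSC instance $(\mathcal{A}=(S_1,\dots,S_n),\mathcal{B}=(T_1,\dots,T_n),U)$ with $|U|=c\log n$, I would first pass through the intermediate \emph{1-center with facilities} problem: build a facility string for each $S_i$ and a client string for each $T_j$, all in $\{0,1\}^{O_\eps(\log n)}$, so that the distance between the string of $S_i$ and that of $T_j$ is at most $\theta$ when $S_i\cap T_j\neq\emptyset$ and at least $(1+\delta)\theta$ when $S_i\cap T_j=\emptyset$; I would then append $O(\log n)$ coordinates and a special string $s$ that is far from every client but close to every facility --- the two-step reduction sketched at the start of Section~\ref{sec:exactLB} --- so that the 1-center of the resulting point-set is forced to be a facility. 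Since on $\{0,1\}^d$ the $\ell_p$ distance is a fixed monotone reparametrization of the Hamming distance, a constant-factor Hamming gap gives a constant-factor $\ell_p$ gap for every fixed $p\in\mathbb{R}_{\ge1}\cup\{0\}$; composing with the distance-preserving maps of Lemmas~\ref{lem:hamtoulam} and~\ref{lem:hamtoedit} then extends the statement to the Ulam and Edit metrics of Theorem~\ref{thm:lpinapprox}.

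The constant-factor separation above is exactly what Theorem~\ref{thm:ag-code} buys: following \cite{KLM19}, one encodes $\chi_{S_i}$ and $\chi_{T_j}$ through the codes $C^1_m,C^2_m$ with $m=|U|$ so that deciding $S_i\cap T_j=\emptyset$ reduces to deciding whether a certain codeword of $C^2_m$ (obtained from $C^1_m(\chi_{S_i})$ and $C^1_m(\chi_{T_j})$ via the degree-$2$ closure) is zero; since a nonzero $C^2_m$-codeword has weight at least a constant fraction of the block length, the ``empty'' and ``nonempty'' cases land on opposite sides of a constant-factor gap, which survives a constant-factor rescaling into any $\ell_p$.

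As in Theorem~\ref{thm:euclidean}, a single 1-center instance cannot by itself capture the ``$\forall T$'' of the hitting-set condition (nor the auxiliary ``$\exists$ certificate'' produced inside the gap amplification), so I would instead build one instance $P_v$ for each value $v$ in a small, $n^{o(1)}$-size set of certificates, shifting the encodings so that in $P_v$ the facility of $S_i$ is \emph{far} from the client of $T_j$ precisely when the disjointness of $S_i,T_j$ is witnessed by $v$. Then $S_i$ hits every $T_j$ if and only if the facility of $S_i$ is output --- as a point of 1-center cost $\le(1+\delta)\alpha$ --- by the listing algorithm in \emph{every} $P_v$; running the hypothesized $n^{2-\eps}$ listing algorithm on all $n^{o(1)}$ instances and intersecting the outputs would decide \HSC in time $n^{2-\eps+o(1)}$, a contradiction.

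The main obstacle --- and the reason this only rules out algorithms that list all approximate centers --- is controlling the intra-class distances. The AG-code gadget, like the closest-pair gadgets of \cite{KLM19}, tends to place the facility strings of distinct sets of $\mathcal{A}$ pairwise far apart, which is what one wants for a minimum-over-pairs objective but is fatal for 1-center, where the facility of a \emph{good} $S$ must also be close to every other facility or it would have large 1-center cost and fail to be listed. I would attack this by (i) keeping a ``$\chi_S\circ\overline{\chi_S}$''-style weight-balancing block to hold intra-$\mathcal{A}$ distances below a fixed fraction of the block length, (ii) padding every facility-client pair with a block of always-disagreeing coordinates whose length is tuned so that the disjoint distance still beats the intersecting distance and the intra-$\mathcal{A}$ distance by a constant factor, and (iii) letting the per-value instances carry the ``$\forall T$'' burden so it is never borne inside a single gadget. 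Getting all of these to coexist with one fixed $\delta>0$ valid for every $\ell_p$ --- in particular building an encoding whose intra-class spread is a constant factor below the disjoint distance --- is the heart of the argument, and it is exactly where the current techniques fall short of a decision lower bound.
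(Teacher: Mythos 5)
Your proposal follows essentially the same route as the paper's proof: reduce \HSC to 1-center-with-facilities, use the AG codes of Theorem~\ref{thm:ag-code} (à la \cite{R18,KLM19}) to turn disjointness into a constant-factor Hamming gap (hence a constant-factor $\ell_p$ gap), control intra-class distances by appending the complemented block $(\vec 1-s)$ and a fixed padding block, generate one instance $P_\omega$ per certificate $\omega$ drawn from a subset of $C^2_{m/T}$-codewords, and decide the \HSC instance by running the listing algorithm on all of them and intersecting the outputs, exactly as in Theorem~\ref{thm:euclidean}. Two small inaccuracies in your sketch that do not affect the argument: the certificate set $\tilde C$ has size $n^{\Theta(\eps)}$ (not $n^{o(1)}$), which is why the paper picks $T=2c/\eps$ so that $n^{2-\eps}\cdot|\tilde C|<n^{2-\eps/2}$; and the dimension of the final instance ends up being $d=O_\eps(\log n)$ only because $r=q^{4(T+2)}\cdot\ell$ with $q,T$ constants depending on $\eps$ and $\ell=O(\log n)$, a calculation worth spelling out since the theorem statement bounds $d$.
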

\begin{proof}
Fix $p\in \mathbb{R}_{\ge 1}\cup\{0\}$.
We prove the theorem statement by contradiction. Suppose for some $\eps>0$  there is an algorithm $\mathcal{T}$ running in time $n^{2-\eps}$ that can for every $\delta>0$, given as input a point-set $P\subseteq \mathbb{R}^d$  and a positive real $\alpha$ output all points in $P$ whose 1-center cost in the $\ell_p$-metric is at most $\alpha\cdot (1+\delta)$, where $|P|=n$ and $d=O_{\varepsilon}(\log n)$ (dependency on $\varepsilon$ will become clear later in the proof).

Let $(\mathcal{A}:=(S_1,\ldots ,S_n),\mathcal{B}:=(T_1,\ldots ,T_n),U)$ be an instance arising from \HSC, where $|U|=c\log n$. We think of each set in $\mathcal{A}$ and $\mathcal{B}$ as its characteristic vector in $\{0,1\}^{c\log n}$.  We show how we can decide this instance in $n^{2-\frac{\varepsilon}{2}}$ time using $\mathcal{T}$, thus contradicting \HSC.  
 
The construction below is exactly the same as the one suggested by Rubinstein \cite{R18}. We however, use this construction to fit our purposes of proving lower bound for the  1-center problem.

\subsubsection*{Algebrization}
Fix  $T=2c/\varepsilon$. 
Let $q$ be the smallest prime greater than $T$ (i.e., $q<2\cdot T$). 
Let $m:=c\log n$. Let $C_{m/T}^1$ and $C_{m/T}^2$ be the codes guaranteed in Theorem~\ref{thm:ag-code} over $\mathbb{F}_{q^2}$ with block length $\ell\le \lambda m/T$.

Let $\tilde{C}\subseteq C_{m/T}^2$ such that  $\omega\in \tilde{C}$ if and only if  $\omega\mid_{[m/T]}= \vec{0}$ (i.e., $\omega$ has a zero entry in each of the first $m/T$ coordinates). For every $\omega\in\tilde{C}$ we define two functions $\tau_{\mathcal{A}}^\omega,\tau_{\mathcal{B}}^\omega:\{0,1\}^{m}\to \{0,1\}^{r}$, where $r:= q^{4(T+2)} \times \ell$. We can thus index every $i\in [r]$ using elements in $ \mathbb{F}_{q^2}^T\times \mathbb{F}_{q^2}^T\times [\ell]$.  

Fix $x\in \{0,1\}^m$. 
Let $x=(x^1,\ldots ,x^T)\in\{0,1\}^m$ where for all $i\in[T]$ we have $x^i\in\{0,1\}^{m/T}$. 
We define $\tau_{\mathcal{A}}^\omega(x)$ coordinate wise. Fix $\zeta\in[r]$ and we think of $\zeta$ as follows:
 \begin{align}
 \zeta=\left((\mu_1^{\mathcal{A}},\ldots,\mu_{T+2}^{\mathcal{A}}),(\mu_1^{\mathcal{B}},\ldots,\mu_{T+2}^{\mathcal{B}}),j\right)\in  \mathbb{F}_{q^2}^{T+2}\times\mathbb{F}_{q^2}^{T+2}\times [\ell].\label{eq:zeta}
 \end{align}
 
We define $\tau_{\mathcal{A}}^\omega(x)_{\zeta}$ to be 1 if and only if: 
\[\underset{i\in[T+2]}{\sum} \mu_i^{\mathcal{A}} \cdot \mu_i^{\mathcal{B}}=\omega(j)\text{\ \ and \ \ }\forall i\in[T],\  \mu_i^{\mathcal{A}}=C_{m/T}^1(x^i)_j,\ \text{and }\mu_{T+1}^{\mathcal{A}}=0,\ \mu_{T+2}^{\mathcal{A}}=C_{m/T}^1(\mathbbm{1}^{m/T})_j.\]

 Similarly, we define $\tau_{\mathcal{B}}^\omega(x)$ coordinate wise. Fix $\zeta\in[r]$ and we think of $\zeta$ as in \eqref{eq:zeta}.
We define $\tau_{\mathcal{B	}}^\omega(x)_{\zeta}$ to be 1 if and only if:
\[\underset{i\in[T+2]}{\sum} \mu_i^{\mathcal{A}} \cdot \mu_i^{\mathcal{B}}=\omega(j)\text{\ \ and \ \ }\forall i\in[T],\  \mu_i^{\mathcal{B}}=C_{m/T}^1(x^i)_j\ \text{and }\mu_{T+1}^{\mathcal{B}}=C_{m/T}^1(\mathbbm{1}^{m/T})_j,\ \mu_{T+2}^{\mathcal{B}}=0.\]

\subsubsection*{Construction}

For every $S_i\in \mathcal{A}$, we define $s_i^{\omega}:=\tau_{\mathcal{A}}^{\omega}(S_i)$. Further, we define $\tilde{s}_i^{\omega}=(s_i^{\omega},\mathbbm{1}^r-s_i^{\omega},\mathbbm{1}^{2r})\in \{0,1\}^{4r}$. 
For every $T_j\in \mathcal{B}$, we define $t_j^{\omega}:=\tau_{\mathcal{B}}^{\omega}(T_j)$. Further, we define $\tilde{t}_j^{\omega}=(\mathbbm{1}^r-t_j^{\omega},t_j^{\omega},{0}^{2r})\in \{0,1\}^{4r}$. 

We define the point-set $P_{\omega}$ to be $P_{\omega}^{\mathcal{A}}:=\{\tilde{s}_i^{\omega}\mid S_i\in\mathcal{A}\}\cup P_{\omega}^{\mathcal{B}}:=\{\tilde{t}_j^{\omega}\mid T_j\in \mathcal{B}\}\cup\{\mathbbm{1}^{4r}\}$. 
Let $\alpha:=(2q^{4(T+2)}-4q^{2(T+1)})\cdot \ell + 2r +\ell$.  Let $\delta:=1/(4q^{4T}-4q^{2T-2} +1)$.

\subsubsection*{Analysis}

We run $\mathcal{T}$ on $(P_{\omega},\alpha)$ for every $\omega\in \tilde{C}$. 
Let $\mathcal{O}_{\omega}\subseteq P_{\omega}$ be the output of running $\mathcal{T}$ on $(P_{\omega},\alpha)$. In other words for every $\omega\in \tilde{C}$ and every $s\in\mathcal{O}_t$ we have that for every $s'\in P_{\omega}$, $\|s-s'\|_p\le (1+\delta)^{1/p}\cdot \alpha^{1/p}$.

We claim that there exists $S$ in $\mathcal{A}$ such that it intersects with every subset $T$ in $\mathcal{B}$ if and only if there exists $i\in[n]$ such that for all $\omega\in \tilde{C}$, we have $\tilde{s}_i^{\omega}\in \mathcal{O}_\omega$. 

Suppose there exists $S_{i^*}$ in $\mathcal{A}$ such that it intersects with every subset $T$ in $\mathcal{B}$. Fix $\omega\in\tilde{C}$. We have that $\tilde{s}_{i^*}^{\omega}$ is at distance at most $(2r)^{1/p}$ from every other point in $P_{\omega}^{\mathcal{A}}$ just from construction. Suppose there is $\tilde{t}_{j}^{\omega}\in P_{\omega}^{\mathcal{B}}$ such that their distance is greater than $\alpha^{1/p}$ then from the construction, their distance must be at least $(1+\delta)^{1/p}\cdot \alpha^{1/p}$, which implies that $S_{i^*}\cap T_j=\emptyset$, a contradiction.

On the other hand, if for every $S$ in $\mathcal{A}$ there exists $T$ in $\mathcal{B}$ such that $S$ and $T$ are disjoint then we show that for any point $\tilde{s}_{i}^{\omega}$ there exists $\omega\in\tilde{C}$ such that $\tilde{s}_{i}^{\omega}\notin \mathcal{O}_{\omega}$. Fix $i\in [n]$. Let $T_j\in \mathcal{B}$ such that $S_i\cap T_j=\emptyset$. Let $\omega^*:=\sum_{e\in [T]}C_{m/T}^1(S_i^e)\cdot C_{m/T}^1(T_j^e)$.  From Theorem~\ref{thm:ag-code} we have that $\omega^*\in \tilde{C}$. Thus $\tilde{s}_{i}^{\omega^*}$ and $\tilde{t}_j^{\omega^*}$ in $P_{\omega^*}$ are at distance at least $(1+\delta)^{1/p}\cdot \alpha^{1/p}$ and thus $\tilde{s}_i^{\omega^*}\notin \mathcal{O}_{\omega^*}$. Also, note that for all $\omega\in \tilde{C}$, we have that every point in $P_{\omega}^{\mathcal{B}}$ is at distance at least $(3r)^{1/p}$ from $\mathbbm{1}^{4r}$.

Finally, note that the total run time was $O(n^{2-\varepsilon}\cdot |\tilde{C}|)=O(n^{2-\varepsilon+\frac{c}{T}})<n^{2-\frac{\varepsilon}{2}}$.
\end{proof}

We remark that the above construction is the same as the one in \cite{R18} albeit for a different problem (nearest neighbors problem).

Theorem~\ref{thm:hardnessapprox} readily extend to the edit metric from the below statement and to the Ulam metric from Lemma~\ref{lem:hamtoulam}.

\begin{lemma}[Rubinstein \cite{R18}]\label{lem:edit}
For large enough $d\in\mathbb{N}$, there is a function $\eta:\{0,1\}^d\to \{0,1\}^{d'}$, where $d'=O(d\log d)$, such that for all $a,b\in\{0,1\}^d$ the following holds for some constant $\lambda>0$:
\[\left|\ed(\eta(a),\eta(b))-\lambda\cdot \log d\cdot \|a-b\|_0\right|= o(d').
\]
Moreover, for any $a\in \{0,1\}^d$, $\eta(a)$ can be computed in $2^{o(d)}$ time.
\end{lemma}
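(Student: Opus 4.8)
The statement to prove is Lemma~\ref{lem:edit}: an embedding $\eta:\{0,1\}^d\to\{0,1\}^{d'}$ with $d'=O(d\log d)$ that approximately scales Hamming distance into edit distance, up to an additive error $o(d')$, and is computable in $2^{o(d)}$ time. Since this is quoted as a lemma due to Rubinstein \cite{R18}, the plan is to reconstruct the standard ``block substitution code'' argument rather than to invent something new. The idea is to replace each bit $a_i$ by one of two long carefully chosen \emph{codeword blocks} $B_0^{(i)}, B_1^{(i)}$ of length $L=\Theta(\log d)$, concatenated over $i\in[d]$, so that (i) if $a_i=b_i$ the two blocks are identical and contribute $0$, and (ii) if $a_i\neq b_i$ the blocks differ so substantially that no alignment can do better than paying $\approx L$ edit operations on that block, while never paying much more than $L$ either. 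This makes $\ed(\eta(a),\eta(b))\approx L\cdot\|a-b\|_0$, which is exactly the claimed relation with $\lambda\cdot\log d := L$ (up to the constant hidden in $\Theta(\log d)$) and $d'=dL=O(d\log d)$.

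\textbf{Key steps, in order.} First I would fix the block length $L=c\log d$ for a suitable constant $c$ and, for each coordinate $i\in[d]$, pick two blocks $B_0^{(i)}, B_1^{(i)}\in\{0,1\}^L$. The blocks need three properties simultaneously: (a) every pair $B_\sigma^{(i)}, B_{\sigma'}^{(i')}$ with $(\sigma,i)\neq(\sigma',i')$ has \emph{edit distance} (not just Hamming distance) $\geq (1-o(1))L$ — this is what forces the per-block cost from below; (b) the blocks are ``non-self-overlapping'' / synchronizing, so that in any global alignment an insertion/deletion cannot be used to shift one coordinate's block cheaply onto a neighbour's block — concretely one adds a unique delimiter pattern or uses the fact that random strings have large edit distance from all their own cyclic shifts and substrings; (c) they are explicitly and quickly computable. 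The cleanest route is the probabilistic one used in Lemma~\ref{lem:hamtoedit}: sample all blocks uniformly at random; a union bound over the $O(d^2)$ pairs of blocks and over all $O(d')$ alignment ``breakpoints'' shows that with high probability every relevant pair of substrings of $\eta(a),\eta(b)$ has edit distance $\Omega(L)=\Omega(\log d)$ whenever they are supposed to be far — this needs $L=\Omega(\log d)$ and gives the $2^{o(d)}$ (indeed $\poly(d)$, or trivially $2^{O(\poly\log d)}$) construction time by derandomizing over a small sample space, or one simply asserts existence. Second, the \emph{upper bound} $\ed(\eta(a),\eta(b))\leq L\cdot\|a-b\|_0$: align block $i$ to block $i$ for every $i$; identical blocks cost $0$, differing blocks cost at most $L$ each (substitute the whole block). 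Third, the \emph{lower bound}: take an optimal alignment of $\eta(a)$ and $\eta(b)$; I would argue that it can be ``rounded'' to a block-respecting alignment without increasing the cost by more than $o(d')$ — any long insertion/deletion run that crosses block boundaries can be charged against the $\geq(1-o(1))L$ edit distance between mismatched blocks guaranteed by property (a)/(b), so the number of blocks that are non-trivially ``touched'' by insert/delete operations is $O(\ed/L)$ and the remaining blocks are substitution-only, each mismatched block costing $\geq(1-o(1))L$; summing gives $\ed(\eta(a),\eta(b))\geq(1-o(1))L\cdot\|a-b\|_0$, i.e.\ within additive $o(d')$ of $L\cdot\|a-b\|_0$. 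Combining the two bounds yields $|\ed(\eta(a),\eta(b))-\lambda\log d\cdot\|a-b\|_0|=o(d')$ with $\lambda\log d=L$.

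\textbf{Main obstacle.} The routine parts are the upper bound and the combinatorics of the block length / union bound. The genuinely delicate step is the lower bound's ``alignment rounding'': an adversarial alignment could, in principle, use a global shift by one position to line up block $i$ of $\eta(a)$ with block $i{+}1$ of $\eta(b)$, potentially matching mismatched-content blocks cheaply, or interleave many short indels to nibble at many blocks at once. Ruling this out is exactly why the blocks must be chosen with the stronger synchronization property (b) — large edit distance from \emph{all} shifts and substrings of \emph{all other} blocks, not merely pairwise Hamming distance — and why a naive reuse of Lemma~\ref{lem:hamtoedit}'s argument (which only guards against block-to-block copying, not against cross-block shifted alignments) is insufficient. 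I would therefore spend the bulk of the effort stating precisely the ``every length-$\ge L/2$ substring of $\eta(a)$ is $\Omega(L)$-far in edit distance from every length-$\ge L/2$ substring of $\eta(b)$ unless they are corresponding equal blocks'' property, proving it holds whp for random blocks with $L=\Theta(\log d)$ via a union bound over $\poly(d)$ substring pairs, and then using it as a black box to do the charging argument above. For the $2^{o(d)}$ runtime claim one can either derandomize (conditional expectations over pairwise-independent-ish blocks suffice since all events are low-degree) or, since $\poly(d)\le 2^{o(d)}$, just brute-force search for a good block assignment in time $2^{\poly\log d}=2^{o(d)}$.
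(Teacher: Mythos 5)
The paper does not prove Lemma~\ref{lem:edit} at all; it simply cites Rubinstein~\cite{R18} and uses the statement as a black box. So there is no in-paper proof to match against, and your task is really to reconstruct Rubinstein's argument. Your block-substitution plan is a plausible starting point, but it has a quantitative flaw that is fatal to the way you close the argument. You require property (a): that any two distinct random blocks in $\{0,1\}^L$ (with $L=\Theta(\log d)$) have \emph{edit} distance $(1-o(1))L$, and you then sandwich $\ed(\eta(a),\eta(b))$ between the trivial upper bound $L\cdot\|a-b\|_0$ (substitute each mismatched block wholesale) and the claimed lower bound $(1-o(1))L\cdot\|a-b\|_0$, concluding the two agree up to $o(d')$ and reading off $\lambda\log d:=L$. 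But property~(a) is false, and in fact impossible: two uniformly random $L$-bit strings have expected edit distance $\alpha L+o(L)$ for a constant $\alpha<1/2$ (substituting alone already costs only $\approx L/2$, and indels reduce it further; this is governed by the Chv\'atal--Sankoff constant), and a sphere-packing count shows one cannot even place $2d$ binary strings of length $O(\log d)$ pairwise $(1-o(1))L$ apart. Note that the related Lemma~\ref{lem:hamtoedit} in the paper, which you invoke as an analogue, only ever proves an $\Omega(\log d)$ separation between its random padding strings — never $(1-o(1))L$ — and moreover that construction deliberately makes the two blocks at a given coordinate differ in a single bit, so it yields stretch factor $1$, not $\Theta(\log d)$. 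So your lower bound, as stated, is wrong, and your upper and lower bounds actually sit on opposite sides of the truth: the real per-mismatched-block cost is roughly $\alpha L$, strictly less than the $L$ you use for the upper bound and strictly less than the $(1-o(1))L$ you claim for the lower bound.

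There is a second, independent gap: even after fixing the above to use the correct per-block quantity $E_i:=\ed(B_0^{(i)},B_1^{(i)})$, your argument (alignment rounding plus the achievable $\Omega(L)$ separation) at best shows $\ed(\eta(a),\eta(b)) = \sum_{i:a_i\ne b_i}E_i \pm o(d')$. That is not yet the lemma, because the lemma asks for a \emph{single} constant $\lambda$ such that $\ed(\eta(a),\eta(b)) = \lambda\log d\cdot\|a-b\|_0 \pm o(d')$ uniformly over all $(a,b)$. You must therefore also show that $\sum_{i\in S}E_i$ is within $o(d')$ of $\mu L\,|S|$ for \emph{every} $S\subseteq[d]$, where $\mu L:=\mathbb{E}[\ed(B_0,B_1)]$; otherwise an adversary choosing $S$ to be the coordinates with the largest (or smallest) $E_i$ can create an $\Omega(d')$ discrepancy. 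The fix is a concentration step your proposal omits entirely: the $E_i$ are i.i.d.\ with bounded differences, so $\mathrm{Var}(E_i)=O(L)$ by Efron--Stein, hence $\mathbb{E}\,|E_i-\mu L|=O(\sqrt{L})$, and therefore $\sum_{i=1}^d|E_i-\mu L|=O(d\sqrt{L})=O(d\sqrt{\log d})=o(d\log d)=o(d')$ with high probability; this bounds $\max_{S}\bigl|\sum_{i\in S}(E_i-\mu L)\bigr|$ and is exactly what pins down $\lambda$ (one takes $\lambda$ to be the constant with $\mu L=\lambda\log d+o(\log d)$). Without this step the constant $\lambda$ is not well-defined by your argument. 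Finally, a minor point: a na\"ive ``brute-force over all block assignments'' is $2^{\Theta(d\log d)}$, not $2^{o(d)}$; you need either a limited-independence sample space or to simply accept a randomized construction with deterministic verification, both of which are fine but should be stated rather than waved at.
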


\bibliographystyle{alpha}
\bibliography{references.bib}
\appendix

\end{document}